\newcommand{\ourTitle}{Solving k-center Clustering (with Outliers) in MapReduce and Streaming, almost as Accurately as Sequentially.}
\newcommand{\opt}{r_k^*(S)}
\newcommand{\opto}{r_{k, z}^*(S)}
\newcommand{\gmm}{\textsc{gmm}\xspace}
\newcommand{\outliers}{{\sc OutliersCluster}\xspace}
\newcommand{\BO}[1]{O\left(#1\right)}
\newcommand{\BT}[1]{\Theta\left(#1\right)}
\renewcommand{\epsilon}{\varepsilon}
\DeclareMathOperator*{\argmax}{arg\,max}
\newcommand{\Let}[2]{#1 $\leftarrow$ #2}
\newtheorem{theorem}{Theorem}
\newtheorem{lemma}{Lemma}
\newtheorem{corollary}{Corollary}
\newcommand{\dataset}[1]{\texttt{#1}}
\newif\ifextended
\begin{document}

\title{\ourTitle}

\author{Matteo Ceccarello}
\affiliation{
  \institution{IT University and BARC}
  \city{Copenhagen}
  \state{Denmark}
}
\email{mcec@itu.dk}

\author{Andrea Pietracaprina}
\affiliation{
  \institution{University of Padova}
  \city{Padova}
  \state{Italy}
}
\email{andrea.pietracaprina@unipd.it}

\author{Geppino Pucci}
\affiliation{
  \institution{University of Padova}
  \city{Padova}
  \state{Italy}
}
\email{geppo@dei.unipd.it}

\begin{abstract}
Center-based clustering is a fundamental primitive for data analysis
and becomes very challenging for large datasets. In this paper, we
focus on the popular $k$-center variant which, given a set $S$ of points
from some metric space and a parameter $k<|S|$, requires to identify a
subset of $k$ centers in $S$ minimizing the maximum distance of any point
of $S$ from its closest center. A more general formulation, introduced
to deal with noisy datasets, features a further parameter $z$ and allows
up to $z$ points of $S$ (outliers) to be disregarded when computing the
maximum distance from the centers. We present coreset-based 2-round
MapReduce algorithms for the above two formulations of the problem,
and a 1-pass Streaming algorithm for the case with outliers. For any
fixed $\epsilon>0$, the algorithms yield solutions whose approximation
ratios are a mere additive term $\epsilon$ away from those achievable by
the best known polynomial-time sequential algorithms, a result that
substantially improves upon the state of the art. 
Our algorithms are rather simple and adapt to the intrinsic complexity
of the dataset, captured by the doubling dimension $D$ of the metric
space. Specifically, our analysis shows that the algorithms become
very space-efficient for the important case of small (constant)
$D$. These theoretical results are complemented with a set of
experiments on real-world and synthetic datasets of up to over a
billion points, which show that our algorithms yield better quality
solutions over the state of the art while featuring excellent
scalability, and that they also lend themselves to sequential
implementations much faster than existing ones.
\end{abstract}

\maketitle

\section{Introduction} \label{sec-introduction}
\noindent \sloppy Center-based clustering is a fundamental
unsupervised learning primitive for data management, with applications
in a variety of domains such as database search, bioinformatics,
pattern recognition, networking, facility location, and many more
\cite{HennigMMR15}.  Its general goal is to partition a set of data
items into groups according to a notion of similarity, captured by
closeness to suitably chosen group representatives, called
centers. There is an ample and well-established literature on
sequential strategies for different instantiations of center-based
clustering \cite{AwasthiB15}. However, the explosive growth of data
that needs to be processed often rules out the use of these strategies
which are efficient on small-sized datasets, but impractical on large
ones. Therefore, it is of paramount importance to devise efficient
clustering strategies tailored to the typical computational frameworks
for big data processing, such as MapReduce and Streaming
\cite{LeskovecRU14}.

In this paper, we focus on the \emph{$k$-center} problem, formally
defined as follows.  Given a set $S$ of points in a metric space and a
positive integer $k < |S|$, find a subset $T \subseteq S$ of $k$
points, called \emph{centers}, so that the maximum distance between
any point of $S$ to its closest center in $T$ is minimized. (Note that the association of each point to the closest center naturally
defines a clustering of $S$.)  Along with $k$-median and $k$-means,
which require to minimize, respectively, the sum of all distances and
all square distances to the closest centers, $k$-center is
a very popular instantiation of center-based clustering which has
recently proved a pivotal primitive for data and graph analytics
\cite{IndykMMM14,AghamolaeiFZ15,CeccarelloPPU15,CeccarelloPPU16,CeccarelloPPU17,CeccarelloFPPV17},
and whose efficient solution in the realm of big data has attracted a
lot of attention in the literature
\cite{Charikar2001,McCutchen2008,EneIM11,MalkomesKCWM15}.

The $k$-center problem is NP-hard \cite{Gonzalez85},
therefore one has to settle for approximate solutions. Also, since its
objective function involves a maximum, the solution is at risk of
being severely influenced by a few ``distant'' points, called
\emph{outliers}.  In fact, the presence of outliers is inherent in
many datasets, since these points are often artifacts of data
collection, or represent noisy measurements, or simply erroneous
information.  To cope with this problem, $k$-center admits
a formulation that takes into account outliers \cite{Charikar2001}:
when computing the objective function, up to $z$ points are allowed to
be discarded, where $z$ is a user-defined input parameter.

A natural approach to compute approximate solutions to large instances
of combinatorial optimization problems entails efficiently extracting
a much smaller subset of the input, dubbed \emph{coreset}, which
contains a good approximation to the global optimum, and then applying
a standard sequential approximation algorithm to such a coreset.  The
benefits of this approach are evident when the coreset construction is
substantially more efficient than running the (possibly very
expensive) sequential approximation algorithm directly on the whole
input, so that significant performance improvements are attained by
confining the execution of such algorithm on a small subset of the data.  Using coresets
much smaller than the input, the authors of
\cite{MalkomesKCWM15} present MapReduce algorithms for the $k$-center
problem with and without outliers, whose (constant) approximation
factors are, however, substantially larger than their best sequential
counterparts. In this work, we further leverage the coreset approach
and unveil interesting tradeoffs between the coreset size and the
approximation quality, showing that better approximation is achievable
through larger coresets.  The obtainable tradeoffs are regulated by
the doubling dimension of the underlying metric space and allow us to
obtain improved MapReduce and Streaming algorithms for the two
formulations of the $k$-center problem, whose approximation ratios can be
made arbitrarily close to the one featured by the best sequential
algorithms. Also, as a by-product, we obtain a sequential
algorithm for the case with outliers which is 
considerably faster than existing ones.

\subsection{Related work} \label{subsec:relwork}

Back in the 80's, Gonzalez \cite{Gonzalez85} developed a very
popular 2-approximation sequential algorithm for the $k$-center
problem  running in  $\BO{k|S|}$ time, which is referred to as \gmm in
the recent literature. In the same paper, the author showed that it is
impossible to achieve an approximation factor $2-\epsilon$, for fixed
$\epsilon >0$, in general metric spaces, unless $P=NP$.  To deal with
noise in the dataset, Charikar et al.~\cite{Charikar2001} introduced
the $k$-center problem with $z$ outliers, where the clustering is
allowed to ignore $z$ points of the input.  For this problem, they
gave a 3-approximation algorithm which runs in $\BO{k|S|^2 \log |S|}$ time.
Furthermore, they proved that, for this problem, it is impossible to
achieve an approximation factor $3-\epsilon$, for fixed $\epsilon >0$,
in general metric spaces, unless $P=NP$.

With the advent of big data, a lot of attention has been devoted to
the MapReduce model of computation, where a set of processors with
limited-size local memories process data in a sequence of parallel
rounds~\cite{DeanG08,PietracaprinaPRSU12,LeskovecRU14}.  The $k$-center problem
under this model was first studied by Ene et al.~\cite{EneIM11}, who
provided a 10-approximation randomized algorithm.  This result was
subsequently improved in~\cite{MalkomesKCWM15} with a deterministic
4-approximation algorithm requiring an $\BO{\sqrt{|S|k}}$-size local
memory.  As for the $k$-center problem with $z$ outliers, a
deterministic $13$-approximation MapReduce algorithm was 
presented in \cite{MalkomesKCWM15}, requiring an
$\BO{\sqrt{|S|(k+z)}}$-size local memory.  We remark that randomized
multi-round MapReduce algorithms for the two formulations of the $k$-center
problem, with approximation ratios $2$ and 4 respectively, have been
claimed but not described in the short communication \cite{ImM15}.
While, theoretically, the MapReduce algorithms proposed in our work  seem competitive with
respect to both round complexity and space requirements with the algorithms announced in \cite{ImM15}, 
any comparison is clearly subject to the availability of more details.

As mentioned before, the algorithms in \cite{MalkomesKCWM15} are based on the use of
(\emph{composable}) \emph{coresets}, a very useful tool in the
MapReduce setting \cite{Agarwal2004,IndykMMM14}.  For a given
objective function, a coreset is a small subset extracted from the input
which embodies a solution whose cost is close to the cost of the
optimal solution on the whole set.  The additional property of
composability requires that, if coresets are extracted from distinct
subsets of a given partition of the input, their union embodies a
close-to-optimal solution of the whole input.  Composable coresets
enable the development of parallel algorithms, where each processor
computes the coreset relative to one subset of the partition, and
the computation of the final solution is then performed by one
processor that receives the union of the coresets.  Composable
coresets have been used for a number of problems, including diversity
maximization~\cite{IndykMMM14,AghamolaeiFZ15,CeccarelloPPU17,CeccarelloPP18},
submodular maximization~\cite{Mirrokni2015}, graph matching and
vertex cover~\cite{Assadi2017}. In  \cite{BadoiuHI02} the authors
provide a coreset-based $(1+\epsilon)$-approximation sequential algorithm to the
$k$-center problem for $d$-dimensional Euclidean spaces, whose time is
exponential in $k$ and $(1/\epsilon)^2$ and linear in $d$ and $|S|$.
However, the coreset construction is rather involved, not easily parallelizable and the 
resulting algorithm seems to be mainly of theoretical interest. 

\sloppy
Another option when dealing with large amounts of data is to process
the data in a streaming fashion. In the Streaming model, algorithms use
a single processor with limited working memory and are allowed only a
few sequential passes over the input (ideally just
one)~\cite{HenzingerRR98,LeskovecRU14}.  Originally developed for the external
memory setting, this model also captures the scenario in which data is
generated on the fly and must be analyzed in real-time, for instance
in a streamed DMBS or in a social media platform (e.g., Twitter trends
detection).  Under this model, Charikar et al.~\cite{CharikarCFM04}
developed a 1-pass algorithm for the $k$-center problem which
requires $\BT{k}$ working memory and computes an 8-approximation,
deterministically, or a 5.43-approximation, probabilistically. Later,
the result was improved in~\cite{McCutchen2008} attaining a
$(2+\epsilon)$ approximation, deterministically, needing a working
memory of size $\BT{k\epsilon^{-1}\log(\epsilon^{-1})}$.  In the same paper, the
authors give a deterministic $(4+\epsilon)$-approximation Streaming
algorithm for the formulation with $z$ outliers, which requires
$\BO{kz\epsilon^{-1}}$ working memory.

\subsection{Our contribution} \label{sec:contribution}

The coreset-based MapReduce algorithms of \cite{MalkomesKCWM15} for
$k$-center, with and without outliers, use the \gmm\ sequential
approximation algorithm for $k$-center in a ``bootstrapping'' fashion:
namely, in a first phase, a set of $k$ centers ($k+z$ centers in the
case with $z$ outliers) is determined in each subset of an arbitrary
partition of the input dataset, and then the final solution is
computed on the coreset provided by the union of these centers, using
a sequential approximation algorithm for the specific problem
formulation. Our work is motivated by the following natural question:
what if we select more centers from each subset of the partition in
the first phase?  Intuitively, we should get a better solution than if
we just selected $k$ (resp., $k+z$) centers. In fact, selecting more
and more centers from each subset should yield a solution
progressively closer to the one returned by the best sequential
algorithm on the whole input, at the expense of larger space
requirements. 

This paper provides a thorough characterization of the space-accuracy
tradeoffs achievable by exploiting the aforementioned idea for both
formulations of the $k$-center problem (with and without outliers).
We present improved MapReduce and Streaming algorithms
which leverage a judicious selection of larger (composable) coresets
to boost the quality of the solution embodied in the (union of the)
coresets. We analyze the memory requirements of our algorithms
in terms of the desired
approximation quality, captured by a precision parameter $\epsilon$,
and of the \emph{doubling dimension} $D$ of the underlying metric
space, a parameter that generalizes the dimensionality of Euclidean
spaces to arbitrary metric spaces and is thus related to the 
difficulty of spotting good clusterings.
We remark that this kind of parametrized analysis is particularly
relevant in the realm of big data, where distortions introduced to
account for worst-case scenarios may be too extreme to provide
meaningful insights on actual algorithm's performance, and
it has been employed in a variety of contexts
including diversity maximization, 
clustering, nearest neighbour search, routing, machine learning, 
and graph analytics (see \cite{CeccarelloPPU17} and references therein).

Our specific results are the following:
\begin{itemize} \sloppy
\item
A deterministic 2-round, $(2+\epsilon)$-approximation MapReduce
algorithm for the $k$-center problem, which requires
$\BO{\sqrt{|S|k}(4/\epsilon)^D}$ local memory.
\item 
A deterministic 2-round, $(3+\epsilon)$-approximation  MapReduce algorithm
for the $k$-center problem with $z$ outliers, 
which requires
$\BO{\sqrt{|S|(k+z)}(24/\epsilon)^D}$ local memory.
\item 
A randomized 2-round, $(3+\epsilon)$-approximation  MapReduce algorithm
for the $k$-center problem with $z$ outliers, which
reduces the local memory requirements to 
$\BO{\left(\sqrt{|S|(k+\log|S|)}+z\right)(24/\epsilon)^D}$.
\item
A deterministic 1-pass, $(3+\epsilon)$-approximation Streaming algorithm for
the $k$-center problem with $z$ outliers, which requires
$\BO{(k+z)(96/\epsilon)^D}$  working memory.
\end{itemize}
Using our coreset constructions we can also attain 
a $(2+\epsilon)$-approximation Streaming algorithm 
for $k$-center without outliers, which however
would not improve on the state-of-the-art algorithm \cite{McCutchen2008}.
Nonetheless, for the sake of completeness, we will compare these two algorithms
experimentally in Section~\ref{sec:experiments}.

Observe that for both formulations of the problem, our algorithms
feature approximation guarantees which are a mere additive
term $\epsilon$  larger than the best achievable sequential guarantee, and yield
substantial quality improvements over the state-of-the-art
\cite{MalkomesKCWM15,McCutchen2008}. Moreover, the randomized
MapReduce algorithm for the formulation with outliers
features smaller coresets, thus attaining a reduction in the
local memory requirements which becomes
substantial in plausible scenarios where the number
of outliers $z$ (e.g., due to noise) is considerably larger than the
target number $k$ of clusters, although much smaller than the input
size.  

While our algorithms are applicable to general metric spaces, on
spaces of constant doubling dimension $D$ and for constant $\epsilon$,
their local space/working memory requirements are polynomially
sublinear in the dataset size, in the MapReduce setting, and
independent of the dataset size, in the Streaming setting.  Moreover,
a very desirable feature of our MapReduce algorithms is that they are
\emph{oblivious to $D$}, in the sense that the value $D$ (which may be
not known in advance and hard to evaluate) is not used explicitly in
the algorithms but only in their analysis. In contrast, the 1-pass
Streaming algorithm makes explicit use of $D$, although we will show
that it can be made oblivious to $D$ at the expense of one extra pass
on the input stream.

As a further important result, the MapReduce algorithm
for the case with outliers admits a direct sequential implementation
which substantially improves the time performance of the state-of-the-art
algorithm by \cite{Charikar2001}
while essentially preserving the approximation quality.

We also provide experimental evidence of the competitiveness of our
algorithms on real-world and synthetic datasets of up to over a
billion points, comparing with baselines set by the algorithms
in\ \cite{MalkomesKCWM15} for MapReduce, and\ \cite{McCutchen2008} for
Streaming.  In the MapReduce setting, the experiments show that
tighter approximations over the algorithms in \cite{MalkomesKCWM15}
are indeed achievable with larger coresets.  In fact, while our
theoretical bounds on the space requirements embody large constant factors,
the improvements in the approximation quality are already noticeable
with a modest increase of the coreset size.  In the Streaming setting,
for $k$-center without outliers we show that the
$(2+\epsilon)$-approximation algorithm based on our techniques is
comparable to\ \cite{McCutchen2008}, whereas for $k$-center with
outliers we obtain solutions of better quality using significantly
less memory and time.  The experiments also show that the Streaming
algorithms feature high-throughput, and that the MapReduce algorithms
exhibit high scalability.  Finally, we show that, indeed, implementing
our coreset strategy sequentially yields a substantial running time
improvement with respect to the state-of-the art algorithm
\cite{Charikar2001}, while preserving the approximation quality.
\\[0.2cm] {\bf Organization of the paper} The
rest of the paper is organized as follows.  
Section~\ref{sec:prelim} contains a number of preliminary concepts.
Section~\ref{sec:MR} and Section~\ref{sec:streaming} present,
respectively, our MapReduce and Streaming algorithms.  The
experimental results are reported in
Section~\ref{sec:experiments}. Finally, Section~\ref{sec:conclusions}
offers some concluding remarks.

\section{Preliminaries} \label{sec:prelim}

Consider a metric space $\mathcal{S}$ with distance function $d(\cdot,
\cdot)$. For a point $u \in \mathcal{S}$, the \emph{ball of radius $r$
  centered at $u$} is the set of points at distance at most $r$ from
$u$.  The \emph{doubling dimension} of $\mathcal{S}$ is the smallest
$D$ such that for any radius $r$ and point $u \in \mathcal{S}$, all
points in the ball of radius $r$ centered at $u$ are included in the
union of at most $2^D$ balls of radius $r/2$ centered at suitable
points.  It immediately follows that, for any $0 < \epsilon \le 1$, a
ball of radius $r$ can be covered by at most $(1/\epsilon)^D$ balls of
radius $\epsilon r$. Notable examples of metric spaces with bounded
doubling dimension are Euclidean spaces and spaces induced by
shortest-path distances in mildly-expanding topologies. Also,
the notion of doubling dimension can be defined for
an individual dataset and 
it may turn out much lower than the one of the underlying 
metric space (e.g., a set of collinear points 
in $\Re^2$). In fact, the
space-accuracy tradeoffs of our algorithms only depend on the doubling
dimension of the input dataset.

Define the distance between a point $s \in \mathcal{S}$ and a set
$X \subseteq \mathcal{S}$ as $d(s, X) = \min_{x \in X}d(s, x)$.
Consider now a dataset $S\subseteq  \mathcal{S}$ and a subset  $T\subseteq S$.
We define the \emph{radius of $S$ with respect to $T$} as
\[
  r_T(S) = \max_{s \in S}d(s, T).
\]
The \emph{$k$-center problem} requires to find a subset
$T \subseteq S$ of size $k$ such that $r_{T}(S)$ is minimized. We
define $r_k^*(S)$ as the radius achieved by the optimal solution to the
problem. Note that $T$ induces immediately a partition of $S$
into $k$ clusters by assigning each point to its 
closest center, and we say that $r_T(S)$ is the radius
of such a clustering. 

In Section~\ref{subsec:relwork}  we mentioned the \gmm
algorithm~\cite{Gonzalez85}, which provides a sequential 2-approximation to the
$k$-center problem.  Here we briefly review how \gmm works.  Given a set
$S$, \gmm builds a set of centers $T$ incrementally in $k$
iterations.  An arbitrary point of $S$ is selected as the first center
and is added to $T$.  Then, the algorithm iteratively selects the next
center as the point with maximum distance from $T$, and adds it to $T$,
until $T$ contains $k$ centers.
Note that, rather than setting $k$ \emph{a priori}, \gmm can be used to
grow the set $T$ until a target radius is achieved.  In fact, the
radius of $S$ with respect to the set of centers $T$ incrementally built by \gmm
is a non-increasing function of the iteration number. In this paper,
we will make use of the following property of \gmm which bounds
its accuracy when run on a subset of the data.

\begin{lemma}\label{lem:gmm-subset}
Let $X \subseteq S$. For a given $k$, let $T_X$ be the output of \gmm when run on $X$. We have $r_{T_X}(X) \le 2 \cdot \opt$.
\end{lemma}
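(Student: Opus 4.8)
The plan is to compare the clustering induced by $T_X$ on $X$ with the global optimal $k$-center solution on $S$. Let $T^* \subseteq S$ be an optimal set of $k$ centers for $S$, so that every point of $S$ — and in particular every point of $X$ — lies within distance $\opt$ of some center in $T^*$. I will argue that \gmm, after $k$ iterations on $X$, cannot leave any point of $X$ at distance more than $2\opt$ from $T_X$.

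First I would recall the standard structural property of the greedy \gmm selection: at the end of the $k$ iterations, the set $T_X = \{t_1, \dots, t_k\}$ has the property that $d(t_i, \{t_1, \dots, t_{i-1}\})$ is non-increasing in $i$, and moreover $r_{T_X}(X) \le d(t_i, \{t_1,\dots,t_{i-1}\})$ for every $i \ge 2$, since each $t_i$ is chosen as the farthest point of $X$ from the current center set. Consequently, if we let $\rho = r_{T_X}(X)$ and suppose for contradiction that $\rho > 2\opt$, then the $k$ points $t_1, \dots, t_k$ are pairwise at distance strictly greater than $2\opt$ from one another: indeed for $i < j$, $d(t_i, t_j) \ge d(t_j, \{t_1,\dots,t_{j-1}\}) \ge d(t_k,\{t_1,\dots,t_{k-1}\}) \ge \rho > 2\opt$. (If instead $\rho \le 2\opt$ we are already done.)

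Next I would invoke the pigeonhole argument against $T^*$. Each of the $k$ points $t_1, \dots, t_k \in X \subseteq S$ is assigned to its closest optimal center in $T^*$, which has size $k$; since there are $k$ points and $k$ optimal centers, either this assignment is a bijection or two of the $t_i$ share an optimal center $c \in T^*$. In the latter case, those two points $t_i, t_j$ satisfy $d(t_i, t_j) \le d(t_i, c) + d(c, t_j) \le 2\opt$ by the triangle inequality, contradicting the pairwise separation established above. To finish, I need to rule out the bijection case: if the assignment is a bijection, then there is still one more point to account for — namely, pick any point $x \in X$ realizing the radius, i.e.\ $d(x, T_X) = \rho$; then $x$ together with $t_1,\dots,t_k$ gives $k+1$ points of $X$, each at pairwise distance $> 2\opt$ (the same greedy argument shows $d(x,t_i) \ge \rho > 2\opt$), but only $k$ optimal centers are available, so by pigeonhole two of these $k+1$ points share an optimal center, again yielding a pair within distance $2\opt$ — a contradiction.

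The main obstacle is getting the counting exactly right: one must be careful that the $k+1$ points $\{t_1,\dots,t_k,x\}$ really are pairwise $(>2\opt)$-separated, which relies on the monotonicity of the successive insertion distances in \gmm and on $x$ being the farthest remaining point; and one must ensure $x$ is genuinely distinct from all $t_i$, which holds because $d(x,T_X)=\rho>0$. Once the $k+1$ well-separated points are in hand, the pigeonhole step against the $k$ optimal centers of $S$ is immediate via the triangle inequality, giving $\rho \le 2\opt$ and hence $r_{T_X}(X) \le 2\cdot\opt$ as claimed.
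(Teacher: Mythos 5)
Your proof is correct and follows essentially the same route as the paper's: assume the radius exceeds $2\cdot\opt$, exhibit $k+1$ points of $X$ (the greedy centers plus the farthest point) that are pairwise more than $2\cdot\opt$ apart, and derive a contradiction by pigeonholing them against the $k$ optimal centers of $S$ via the triangle inequality. The only difference is cosmetic: your case split on whether the assignment of $t_1,\dots,t_k$ to optimal centers is a bijection is unnecessary, since one can immediately consider the $k+1$ points $T_X\cup\{x\}$ as the paper does.
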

\begin{proof}
We prove this lemma by rephrasing the proof by Gonzalez~\cite{Gonzalez85} in
terms of subsets.  We need to prove that, $\forall x \in X$, $d(x,
T_X) \le 2 \cdot \opt$.  Assume by contradiction that this is not the
case.  Then, for some $y \in X$ it holds that $d(y,
T_X) > 2 \cdot \opt$.  By the greedy choice of \gmm, we have that for
any pair $t_1, t_2 \in T_X$, $d(t_1, t_2) \ge d(y, T_X)$, otherwise $y$
would have been included in $T_X$. So we have that $d(t_1, t_2) > 2
\cdot \opt$.  Therefore, the set $\{y\} \cup T_X$ consists  of $k+1$
points at distance $> 2\cdot\opt$ from each other.  Consider now the
optimal solution to $k$-center on the set $S$.  Since $(\{y\} \cup
T_X) \subseteq S$, two of the $k+1$ points of $\{y\} \cup T_X$, say $x_1$
and $x_2$, must be closest to the same optimal center $o^*$.  By the
triangle inequality we have $2\cdot\opt < d(x_1, x_2) \le d(x_1, o^*) + d(o^*, x_2) \le 2
\cdot \opt$, a contradiction.
\end{proof}

For a given set $S \subseteq \mathcal{S}$, 
the \emph{$k$-center problem with $z$ outliers} 
requires to identify a set $T$ of $k$ centers 
which minimizes
\[
  r_{T,Z_T}(S) = \max_{s \in S \setminus Z_T} d(s, T),
\]
where $Z_T$ is the set of $z$ points in $S$ with largest distance from
$T$ (ties broken arbitrarily).  In other words, the problem allows to
discard up the $z$ farthest points when computing the radius of the
set of centers, hence of its associated clustering.  For given $S$,
$k$, and $z$, we denote the radius of the optimal solution of this
problem by $r_{k, z}^*(S)$.  It is straightforward to argue that the
optimal solution of the problem without outliers with $k+z$ centers
has a smaller radius than the optimal solution of the problem with $k$
centers and $z$ outliers, that is
\begin{equation}
r_{k+z}^*(S) \le r_{k,z}^*(S).
\label{eq:radius-relation}
\end{equation}

\subsection{Computational frameworks}
A \emph{MapReduce}
algorithm~\cite{DeanG08,PietracaprinaPRSU12,LeskovecRU14} executes in
a sequence of parallel \emph{rounds}. In a round, a multiset $X$ of
key-value pairs is first transformed into a new multiset $X'$ of
key-value pairs by applying a given \emph{map function} (simply called
\emph{mapper}) to each individual pair, and then into a final multiset
$Y$ of pairs by applying a given \emph{reduce function} (simply called
\emph{reducer}) independently to each subset of pairs of $X'$ having
the same key.  The model features two parameters, $M_L$, the
\emph{local memory} available to each mapper/reducer, and $M_A$, the
\emph{aggregate memory} across all mappers/reducers. In our
algorithms, mappers are straightforward constant-space
transformations, thus the memory requirements will be related to the
reducers.  We remark that the MapReduce algorithms presented in this
paper also afford an immediate implementation and similar analysis in
the \emph{Massively Parallel Computation} (MPC) model
\cite{BeameKS13}, which is popular in the database community.

In the \emph{Streaming} framework \cite{HenzingerRR98,LeskovecRU14} the computation is
performed by a single processor with a small working memory, and
the input is provided as a continuous stream of items which is usually
too large to fit in the working memory. Multiple passes on the input
stream may be allowed.  Key performance indicators are the size of the
working memory and the number of passes.

The holy grail of big data algorithmics is the development of
MapReduce (resp., Streaming)
algorithms which work in as few rounds (resp., passes) as possible and 
require substantially sublinear local memory (resp.,
working memory) and linear aggregate memory.

\section{MapReduce algorithms} \label{sec:MR}

The following subsections present our MapReduce algorithms for the
$k$-center problem (Subsection~\ref{sec-noout}) and the $k$-center
problem with $z$ outliers (Subsection~\ref{sec-out}). The algorithms
are based on the use of composable coresets, which were reviewed in
the introduction, and can be viewed as improved variants of those by
\cite{MalkomesKCWM15}. The main novelty of our algorithms
is their leveraging a judiciously increased coreset size to attain
approximation qualities that are arbitrarily close to the ones
featured by the best known sequential algorithms. Also, in the
analysis, we relate the required coreset size to the doubling
dimension of the underlying metric space (whose explicit knowledge, however, is
not required by the algorithms) showing that coreset sizes stay
small for spaces of bounded doubling dimension.

\subsection{MapReduce algorithm for $k$-center} \label{sec-noout}
Consider an instance $S$ of the $k$-center problem and fix a precision
parameter $\epsilon \in (0, 1]$, which will be used to regulate the
  approximation ratio.  The MapReduce algorithm works in two rounds.
  In the first round, $S$ is partitioned into $\ell$ subsets $S_i$ of
  equal size, for $1 \leq i \leq \ell$. In parallel, on each $S_i$ we
  run \gmm incrementally and call $T_i^j$ the set of $j$ centers
  selected in the first $j$ iterations of the algorithm. Let
  $r_{T_i^k}(S_i)$ denote the radius of the set $S_i$ with respect to
  the first $k$ centers.  We continue to run \gmm until the first
  iteration $\tau_i \geq k$ such that $r_{T_i^{\tau_i}}(S_i) \le
  \epsilon/2 \cdot r_{T_i^k}(S_i)$, and define the coreset $T_i =
  T_i^{\tau_i}$.  In the second round, the union of the coresets $T =
  \bigcup_{i=1}^\ell T_i$ is gathered into a single reducer and \gmm
  is run on $T$ to compute the final set of $k$ centers.  In what
 follows, we show that these centers are a good solution to the
  $k$-center problem on $S$.

The analysis relies on the following two lemmas which state that each
input point has a close-by representative in $T$ and that $T$ has
small size.  We define a \emph{proxy function} $p: S \to T$ that maps
each $s \in S_i$ into the closest point in $T_i$, for every $1 \leq i
\leq \ell$.  The following lemma is an easy consequence of
Lemma~\ref{lem:gmm-subset}.
\begin{lemma}\label{lem:k-center-proxy}
For each $s \in S$, 
$d(s, p(s)) \le \epsilon \cdot \opt$.
\end{lemma}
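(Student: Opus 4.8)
The plan is to bound $d(s,p(s))$ for an arbitrary $s \in S$ by tracking which subset of the partition contains $s$ and then applying Lemma~\ref{lem:gmm-subset} twice, once at iteration $k$ and once at the stopping iteration $\tau_i$. First I would fix the index $i$ with $s \in S_i$, so that $p(s)$ is the point of $T_i = T_i^{\tau_i}$ closest to $s$; by definition of the radius this gives $d(s,p(s)) = d(s, T_i^{\tau_i}) \le r_{T_i^{\tau_i}}(S_i)$. Next I would invoke the stopping rule of the algorithm: $\tau_i$ is chosen as the first iteration $\ge k$ for which $r_{T_i^{\tau_i}}(S_i) \le (\epsilon/2)\, r_{T_i^{k}}(S_i)$, so $d(s,p(s)) \le (\epsilon/2)\, r_{T_i^{k}}(S_i)$.

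It then remains to bound $r_{T_i^{k}}(S_i)$, the radius of $S_i$ with respect to its first $k$ \gmm-centers. This is exactly what Lemma~\ref{lem:gmm-subset} gives, applied with $X = S_i$: since $T_i^k$ is the output of \gmm run on $S_i$ for the parameter $k$, we have $r_{T_i^{k}}(S_i) \le 2\,\opt$. Chaining the two inequalities yields $d(s,p(s)) \le (\epsilon/2)\cdot 2\,\opt = \epsilon\,\opt$, which is the claim. Since $s$ was arbitrary, the bound holds for all $s \in S$.

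The only genuinely delicate point — and it is minor — is making sure the stopping iteration $\tau_i$ is well defined, i.e.\ that \gmm actually reaches an iteration where the radius has shrunk by a factor $\epsilon/2$; this is where one implicitly uses that the \gmm radius is non-increasing in the iteration count and eventually hits $0$ once all of $S_i$ is selected, together with $|S_i| \ge k$ so that iteration $k$ makes sense. Everything else is a two-line application of the already-established Lemma~\ref{lem:gmm-subset} plus the triangle-inequality-free observation that the distance to a subset is at most the radius of that subset. I expect the write-up to be only a few sentences.
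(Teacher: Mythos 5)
Your proposal is correct and follows essentially the same argument as the paper's proof: apply Lemma~\ref{lem:gmm-subset} to $X=S_i$ to get $r_{T_i^k}(S_i)\le 2\,\opt$, combine with the stopping condition $r_{T_i}(S_i)\le(\epsilon/2)\,r_{T_i^k}(S_i)$, and observe that $d(s,p(s))\le r_{T_i}(S_i)$. The extra remark on why $\tau_i$ is well defined is a fine (if minor) addition not spelled out in the paper.
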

\begin{proof}
Fix $i \in [1, \ell]$, and consider $S_i \subseteq S$, and the set
$T_i^k$ computed by the first $k$ iterations of \gmm.  Since $S_i$ is
a subset of $S$, by Lemma~\ref{lem:gmm-subset} we have that
$r_{T_i^k}(S_i) \le 2\cdot \opt$.  By construction, we have that
$r_{T_i}(S_i) \le \epsilon/2 \cdot r_{T_i^k}(S_i)$, hence
$r_{T_i}(S_i) \le \epsilon \opt$.  Consider now the proxy function
$p$.  For every $1 \leq i \leq \ell$ and $s \in S_i$, 
it holds that $d(s, p(s)) \le r_{T_i}(S_i) \le
\epsilon \opt$. 
\end{proof}

We can conveniently bound the size of $T$, the union of the coresets, as a function of the
doubling dimension of the underlying metric space. 
\begin{lemma}\label{lem:k-center-size}
If $S$ belongs to a metric space of doubling dimension
$D$, then
\[
|T| \le \ell\cdot k \cdot \left( \frac{4}{\epsilon} \right)^D.
\]
\end{lemma}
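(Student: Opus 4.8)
The plan is to bound the size of each individual coreset $T_i$ and then sum over $i$. Fix $i \in [1,\ell]$. The key quantity is the ratio between the radius $r_{T_i^k}(S_i)$ achieved after $k$ iterations of \gmm\ and the radius $r_{T_i}(S_i) = r_{T_i^{\tau_i}}(S_i)$ at which we stop, which by construction is at most $(\epsilon/2)\, r_{T_i^k}(S_i)$. The heart of the argument is a covering bound: the set $S_i$ is covered by the $k$ balls of radius $r_{T_i^k}(S_i)$ centered at the points of $T_i^k$, and each such ball, by the doubling property iterated roughly $\log_2(2/\epsilon)$ times, can be covered by $(4/\epsilon)^D$ balls of radius $(\epsilon/4)\, r_{T_i^k}(S_i)$ — I will want the covering radius to be comfortably smaller than the stopping radius $r_{T_i}(S_i)$, which is why $(4/\epsilon)^D$ rather than $(2/\epsilon)^D$ is the right bound to aim for.

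Next I would argue that $\tau_i \le k\,(4/\epsilon)^D$. Suppose not. Consider the first $k\,(4/\epsilon)^D + 1$ centers selected by \gmm\ on $S_i$. By the greedy rule, any two of them are at mutual distance at least $r_{T_i^{\tau_i-1}}(S_i)$, and since \gmm\ has not yet stopped at iteration $\tau_i - 1$, this radius exceeds $(\epsilon/2)\, r_{T_i^k}(S_i) \ge 2 \cdot (\epsilon/4)\, r_{T_i^k}(S_i)$. Hence these centers are pairwise at distance strictly greater than twice the covering radius $(\epsilon/4)\, r_{T_i^k}(S_i)$, so no two of them can lie in a common ball of that radius. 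But we just exhibited a cover of $S_i \supseteq T_i^{\tau_i}$ by only $k\,(4/\epsilon)^D$ such balls — a pigeonhole contradiction. Therefore $|T_i| = \tau_i \le k\,(4/\epsilon)^D$, and summing over the $\ell$ subsets gives $|T| = \sum_{i=1}^\ell |T_i| \le \ell \cdot k \cdot (4/\epsilon)^D$.

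The main obstacle is getting the constant inside the doubling exponent right: one must chain the doubling inequality enough times so that the covering radius is at most \emph{half} the guaranteed separation of the \gmm\ centers at the stopping point, since the pigeonhole step needs strict containment of at most one center per small ball. Working out that the separation is $> (\epsilon/2)\,r_{T_i^k}(S_i)$ and the covering radius is $(\epsilon/4)\,r_{T_i^k}(S_i)$ — a factor-of-two slack — forces exactly $\lceil \log_2(4/\epsilon)\rceil$ halvings, yielding the $(4/\epsilon)^D$ factor (using $0<\epsilon\le 1$ so that $4/\epsilon \ge 4$ and the exponent is well-behaved). Everything else is routine: the covering bound is the iterated doubling property stated in Section~\ref{sec:prelim}, and the greedy separation property is exactly what was used in the proof of Lemma~\ref{lem:gmm-subset}.
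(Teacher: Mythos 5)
Your proof is correct and takes essentially the same route as the paper's: cover each of the $k$ clusters induced by $T_i^k$ with $(4/\epsilon)^D$ balls of radius $(\epsilon/4)\,r_{T_i^k}(S_i)$, then combine the \gmm\ greedy separation guarantee with a pigeonhole argument to conclude $\tau_i \le k(4/\epsilon)^D$ and sum over the $\ell$ partitions. The only difference is presentational: the paper argues directly that after $h=k(4/\epsilon)^D$ iterations the radius already meets the stopping condition (using $T_i^h$ together with the farthest point), whereas you run the same pigeonhole as a proof by contradiction on the first $h+1$ centers.
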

\begin{proof}
Fix an $i\in[1, \ell]$. We prove an upper bound on the number $\tau_i$
of iterations of \gmm needed to obtain $r_{T_i^{\tau_i}}(S_i)
\le (\epsilon/2) r_{T_i^k}(S_i)$, which in turn 
bounds the size of $T_i$.
Consider the $k$-center clustering of $S_i$ induced by the
$k$ centers in $T_i^k$, with radius
$r_{T_i^k}(S_i)$.  By the doubling dimension property, we have that
each of the $k$ clusters can be covered using at most $(4/\epsilon)^D$
balls of radius $\le (\epsilon/4) \cdot r_{T_i^k}(S_i)$, for
a total of at most  $h=k(4/\epsilon)^D$ such balls.
Consider now the execution of $h$ iterations
of the \gmm algorithm on $S_i$. Let $T_i^h$ be the set
of returned centers and let $x \in S_i$ be the farthest point
of $S_i$ from $T_i^h$. The center selection process of the 
\gmm algorithm ensures that any two points in
$T_i^h \cup \{x\} $ are at distance at least
$r_{T_i^h}(S_i)$ from one another. Thus, since two of these points
must fall into one of the $h$ aforementioned balls of 
radius $\le (\epsilon/4) \cdot r_{T_i^k}(S_i)$, 
this implies immediately (by the triangle inequality) that
\[
r_{T_i^h}(S_i)  
\le 2 (\epsilon/4) \cdot r_{T_i^k}(S_i)
= (\epsilon/2) \cdot r_{T_i^k}(S_i).
\]
Hence,  after
$h$ iterations we are guaranteed that \gmm finds a set 
$T_i^h$ which meets the stopping condition. Therefore, 
$|T_i|=\tau_i \leq h = k(4/\epsilon)^D$, for every $i \in [1,\ell]$,
and the lemma follows.
\end{proof}
We now state the main result of this subsection.
\begin{theorem}
Let $0 < \epsilon \le 1$.
If the points of $S$ belong to a metric space of doubling dimension
$D$, then the above 2-round MapReduce algorithm computes a
$(2+\epsilon)$-approximation for the $k$-center problem 
with local memory 
$M_L = \BO{|S|/\ell + \ell\cdot k \cdot (4/\epsilon)^D}$
and linear aggregate memory.
\end{theorem}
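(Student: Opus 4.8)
The plan is to prove the three assertions of the theorem in turn: the $(2+\epsilon)$-approximation guarantee, the local-memory bound $M_L = \BO{|S|/\ell + \ell\cdot k \cdot (4/\epsilon)^D}$, and linear aggregate memory. The first assertion is the conceptual heart, but it reduces to a single triangle-inequality step once Lemmas~\ref{lem:gmm-subset} and~\ref{lem:k-center-proxy} are in hand; the memory bounds are essentially bookkeeping across the two rounds.

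For the approximation ratio, let $T^*$ be the set of $k$ centers produced by running \gmm on $T = \bigcup_{i=1}^\ell T_i$ in the second round. Since $T \subseteq S$, Lemma~\ref{lem:gmm-subset} applied with $X = T$ yields $r_{T^*}(T) \le 2\cdot\opt$; in particular $d(p(s), T^*) \le 2\cdot\opt$ for every $s \in S$, because $p(s) \in T$. I would then fix an arbitrary $s \in S$ and bound $d(s, T^*) \le d(s, p(s)) + d(p(s), T^*) \le \epsilon\cdot\opt + 2\cdot\opt = (2+\epsilon)\opt$, using Lemma~\ref{lem:k-center-proxy} for the first summand. Taking the maximum over $s \in S$ gives $r_{T^*}(S) \le (2+\epsilon)\opt$, which is the claimed approximation.

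For the local memory, I would argue round by round. In round~1, reducer $i$ receives the block $S_i$ of size $|S|/\ell$ and runs \gmm incrementally on it, storing only $S_i$ together with the incrementally grown center set (a subset of $S_i$) and the radii needed to test the stopping condition $r_{T_i^{\tau_i}}(S_i) \le (\epsilon/2)\, r_{T_i^k}(S_i)$; this fits in $\BO{|S|/\ell}$ space. In round~2, a single reducer receives $T$ and runs \gmm on it, which by Lemma~\ref{lem:k-center-size} requires $\BO{|T|} = \BO{\ell\cdot k\cdot(4/\epsilon)^D}$ space. Since $M_L$ is the maximum over all reducers and rounds, $M_L = \BO{|S|/\ell + \ell\cdot k\cdot(4/\epsilon)^D}$ as claimed; the mappers are constant-space transformations and do not affect the bound.

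Finally, the aggregate memory is $\BO{|S|}$ in round~1 (the partitioned input) and $\BO{|T|} \le \BO{|S|}$ in round~2 (since $T \subseteq S$), hence linear overall. I do not expect a genuine obstacle here; the only points needing care are (i)~invoking Lemma~\ref{lem:gmm-subset} with the subset $X = T$ rather than with a subset of the original blocks, so that the step $d(p(s),T^*)\le r_{T^*}(T)$ in the triangle inequality is legitimate, and (ii)~faithfully propagating the $(4/\epsilon)^D$ factor from the coreset-size lemma into the local-memory bound.
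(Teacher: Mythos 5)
Your proposal is correct and follows essentially the same argument as the paper: apply Lemma~\ref{lem:gmm-subset} to the subset $T\subseteq S$ to get $r_X(T)\le 2\cdot\opt$ for the second-round \gmm output, combine with Lemma~\ref{lem:k-center-proxy} via one triangle inequality to obtain $(2+\epsilon)\opt$, and derive the memory bounds round by round from Lemma~\ref{lem:k-center-size}. No gaps to report.
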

\begin{proof}
Let $X$ be the solution found by \gmm on $T$.  Since $T \subseteq S$,
from Lemma~\ref{lem:gmm-subset} it follows that $r_X(T) \le 2 \cdot
\opt$.  Consider an arbitrary point $s \in S$, along with its proxy
$p(s) \in T$, as defined before.  By Lemma~\ref{lem:k-center-proxy} we
know that $d(s, p(s)) \le \epsilon\cdot\opt$.  Let $x\in X$ be the
center closest to $p(s)$.  It holds that $d(x, p(s)) \le 2\cdot \opt$.
By applying the triangle inequality, we have that $d(x, s) \le d(x,
p(s)) + d(p(s), s) \le 2\cdot \opt + \epsilon\cdot \opt =
(2+\epsilon)\opt$. The bound on $M_L$ follows
since in the first round each processor needs
to store $|S|/\ell$ points of the input and computes a coreset of size
$\BO{k \cdot (4/\epsilon)^D}$, as per Lemma~\ref{lem:k-center-size},
while in the second round, one processor needs enough memory to store
$\ell$ such coresets. Finally, it is immediate to see that
aggregate memory proportional to the input size
suffices. 
\end{proof}
By setting $\ell = \BT{\sqrt{|S|/k}}$ in the above theorem 
we obtain:
\begin{corollary}
Our 2-round MapReduce algorithm computes a
$(2+\epsilon)$-approximation for the $k$-center problem 
with local memory $M_L = \BO{\sqrt{|S|k} (4/\epsilon)^D}$ and linear
aggregate memory. For constant $\epsilon$ and $D$, 
the local memory bound becomes $M_L = \BO{\sqrt{|S|k}}$.
\end{corollary}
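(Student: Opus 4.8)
The plan is to obtain the Corollary as a direct specialization of the preceding Theorem, choosing the number $\ell$ of subsets so as to balance the two additive contributions to the local memory bound $M_L = \BO{|S|/\ell + \ell\cdot k\cdot(4/\epsilon)^D}$. First I would set $\ell = \BT{\sqrt{|S|/k}}$. With this choice the first term becomes $|S|/\ell = \BT{\sqrt{|S|k}}$ and the second becomes $\ell\cdot k\cdot(4/\epsilon)^D = \BT{\sqrt{|S|k}\,(4/\epsilon)^D}$, so their sum is $\BO{\sqrt{|S|k}\,(4/\epsilon)^D}$, which is exactly the claimed bound. The remaining guarantees require no further work: the Theorem establishes a $(2+\epsilon)$-approximation and linear aggregate memory for \emph{every} value of $\ell$, so both properties are inherited verbatim by this particular instantiation.

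Next I would dispatch the small consistency checks. Since $k < |S|$, we have $\sqrt{|S|/k} \ge 1$, so $\ell$ is a legitimate (positive) number of subsets; one rounds $\ell$ to an integer in $[1,|S|]$ and partitions $S$ into $\ell$ blocks of size $\lceil |S|/\ell\rceil$ or $\lfloor |S|/\ell\rfloor$, which changes neither the asymptotic local memory nor the approximation ratio. I would also remark that $\ell$ is chosen as a function of $|S|$ and $k$ only — in particular it does not depend on $D$ or on the internal thresholds used by \gmm — so the algorithm remains oblivious to the doubling dimension. The final sentence of the statement is then immediate: when $\epsilon$ and $D$ are constants, $(4/\epsilon)^D$ is an absolute constant, and the bound collapses to $M_L = \BO{\sqrt{|S|k}}$.

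There is no real obstacle here; the Corollary is a routine optimization of a free parameter in the Theorem. The only point deserving a line of care is the balancing argument itself — verifying that $\ell = \BT{\sqrt{|S|/k}}$ is indeed the choice that (up to constants) minimizes $|S|/\ell + \ell\cdot k\cdot(4/\epsilon)^D$ when the constant factor $(4/\epsilon)^D$ is treated as part of the ``$\BO{}$'', and that no smaller total is attainable by a different $\ell$. For the purposes of the stated bound, however, it suffices to exhibit this $\ell$ and sum the two $\BT{}$ terms, which I would do in one line.
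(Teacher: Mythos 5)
Your proposal is correct and matches the paper's own derivation: the corollary is obtained precisely by plugging $\ell = \BT{\sqrt{|S|/k}}$ into the theorem's bound $M_L = \BO{|S|/\ell + \ell\cdot k\cdot(4/\epsilon)^D}$, with the approximation guarantee and linear aggregate memory inherited unchanged. The extra remarks (integrality of $\ell$, obliviousness to $D$, the constant-$\epsilon$, constant-$D$ specialization) are fine but not needed beyond the one-line substitution the paper uses.
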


\subsection{MapReduce algorithm for $k$-center with $z$ outliers} \label{sec-out}

Consider an instance $S$ of the $k$-center problem with $z$ outliers
and fix a precision parameter $\hat{\epsilon} \in (0, 1]$ intended, as before,
to regulate the approximation ratio. 
We propose the following
  2-round MapReduce algorithm for the problem. In the first round, $S$
  is partitioned into $\ell$ equally-sized subsets $S_i$, with $1 \leq
  i \leq \ell$, and for each $S_i$, in parallel, \gmm is run
  incrementally. Let $T_i^j$ be the set of the first $j$ selected
  centers.  We continue to run \gmm
  until the first iteration $\tau_i \geq k+z$ such that
  $r_{T_i^{\tau_i}}(S_i) \le \hat{\epsilon}/2 \cdot
  r_{T_i^{k+z}}(S_i)$.  Define the coreset $T_i=T_i^{\tau_i}$. As before,
  for each point $s \in S_i$ we define its \emph{proxy} $p(s)$
  to be the point of $T_i$ closest to $s$, but, furthermore, 
we attach to each
  $t \in T_i$ a \emph{weight} $w_t \geq 1$, which is the number of
  points of $S_i$ with proxy $t$.

In the second round, the union of the weighted coresets  $T = \cup_{i=1}^\ell
T_i$ is gathered into a single reducer. Before describing the details of
this second round, we need to introduce
a sequential algorithm, dubbed \outliers\  (see pseudocode below),
for solving a weighted variant of the $k$-center problem with outliers
which is a modification of the one presented in \cite{MalkomesKCWM15}
(in turn, based on the unweighted algorithm 
of \cite{Charikar2001}). 
\begin{algorithm}
  \caption{\outliers{}$(T, k, r, \hat{\epsilon})$}
  \label{algo:outliers-seq}
  \DontPrintSemicolon

  \Let{$T'$}{$T$}\;
  \Let{$X$}{$\emptyset$}\;

  \While{$((|X| < k) \; \mbox{\bf and} \; (T' \neq \emptyset))$}{
    \lFor{\rm ($t \in T$)} {
      \Let{$B_t$}{$\{v : v \in T' \wedge d(v, t) \le (1+2\hat{\epsilon})\cdot r \}$}
    }
    \Let{$x$}{$\argmax_{t \in T} \sum_{v \in B_t} w_v$}\;
    \Let{$X$}{$X \cup \{x\}$}\;
    \Let{$E_x$}{$\{v : v \in T' \wedge d(v, x) \le (3+4\hat{\epsilon})\cdot r \}$}\;
    \Let{$T'$}{$T' \setminus E_x$}\;
  }
  \Return $X, T'$\;
\end{algorithm}

\noindent
\outliers$(T,k,r,\hat{\epsilon})$ returns two subsets $X, T' \subseteq
T$ such that $X$ is a set of (at most) $k$ centers, and $T'$ is a set
of points referred to as \emph{uncovered points}. The algorithm starts
with $T'=T$ and builds $X$ incrementally in $|X| \le k$ iterations as
follows.  In each iteration, the next center $x$ is chosen as the point
maximizing the aggregate weight of uncovered points in its ball of radius
$(1+2\hat{\epsilon})\cdot r$ (note that $x$ needs not be an uncovered
point).  Then, all uncovered points at distance at most
$(3+4\hat{\epsilon})\cdot r$ from $x$ are removed from $T'$.  The
algorithm terminates when either $|X|=k$ or $T'=\emptyset$. By
construction, the final $T'$ consists of all points at distance
greater than $(3+4\hat{\epsilon})\cdot r$ from $X$.

\sloppy Let us return to the second round of our MapReduce
algorithm. The reducer that gathered $T$ runs
\outliers{}$(T,k,r,\hat{\epsilon})$ multiple times to estimate the
minimum value $r_{\rm min}$ such that the aggregate weight of the
points in the set $T'$ returned by \outliers{}$(T,k,r_{\rm
  min},\hat{\epsilon})$ is at most $z$. More specifically, the
computed estimate, say $\tilde{r}_{\rm min}$, is within a
multiplicative tolerance $(1+\delta)$ from the true $r_{\rm min}$,
with $\delta = \hat{\epsilon}/(3+4\hat{\epsilon})$, and it is obtained
through a binary search over all possible $\BO{|T|^2}$ distances
between points of $T$ combined with a geometric search with step
$(1+\delta)$. 
To avoid storing all $\BO{|T|^2}$ distances, the value of $r$ at each
iteration of the binary search can be determined in space linear in
$T$ by the median-finding Streaming algorithm in \cite{MunroP80}.
The output of the MapReduce algorithm is the set of centers
computed by \outliers{}$(T,k,\tilde{r}_{\rm min},\hat{\epsilon})$.

We now analyze our 2-round MapReduce algorithm. The following lemma
bounds the distance between a point and its proxy.
\begin{lemma}\label{lem:proxy-outliers}
For each $s \in S$, 
$d(s, p(s)) \le \hat{\epsilon}\cdot\opto$.
\end{lemma}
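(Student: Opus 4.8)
The plan is to mimic the argument used for Lemma~\ref{lem:k-center-proxy}, replacing the parameter $k$ with $k+z$ and then invoking the inequality~\eqref{eq:radius-relation} that relates the optimum of the $(k+z)$-center problem to that of the $k$-center problem with $z$ outliers. First I would fix an index $i \in [1,\ell]$ and an arbitrary point $s \in S_i$. Since $p(s)$ is by definition the point of $T_i$ closest to $s$, we have $d(s,p(s)) \le r_{T_i}(S_i)$, and by the stopping rule adopted in the first round, $r_{T_i}(S_i) \le (\hat{\epsilon}/2)\, r_{T_i^{k+z}}(S_i)$.

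Next I would bound $r_{T_i^{k+z}}(S_i)$. Because $S_i \subseteq S$ and $T_i^{k+z}$ is precisely the output of \gmm run on $S_i$ for $k+z$ iterations, Lemma~\ref{lem:gmm-subset}, applied with $k+z$ in place of $k$, gives $r_{T_i^{k+z}}(S_i) \le 2\, r_{k+z}^*(S)$. Substituting this into the previous chain of inequalities yields $d(s,p(s)) \le \hat{\epsilon}\, r_{k+z}^*(S)$.

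Finally, inequality~\eqref{eq:radius-relation} states that $r_{k+z}^*(S) \le r_{k,z}^*(S) = \opto$, whence $d(s,p(s)) \le \hat{\epsilon}\cdot\opto$. Since $i$ and the point $s \in S_i$ were arbitrary, the bound holds for every $s \in S$, which is what we wanted.

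I do not expect any genuine obstacle here: the statement is a direct structural consequence of the coreset construction, entirely parallel to Lemma~\ref{lem:k-center-proxy}. The only points that require care are (i) invoking Lemma~\ref{lem:gmm-subset} with the parameter $k+z$, i.e.\ the number of centers actually selected before the stopping test is evaluated, rather than with $k$, and (ii) remembering that the relevant optimal radius at this stage is that of the $(k+z)$-center problem \emph{without} outliers, which is in turn dominated by $\opto$ through~\eqref{eq:radius-relation}; the weights $w_t$ attached to the coreset points play no role in this particular lemma.
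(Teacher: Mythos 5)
Your proposal is correct and follows essentially the same route as the paper's own proof: bound $d(s,p(s))$ via the stopping rule by $(\hat{\epsilon}/2)\,r_{T_i^{k+z}}(S_i)$, apply Lemma~\ref{lem:gmm-subset} with $k+z$ centers to get $r_{T_i^{k+z}}(S_i) \le 2\,r_{k+z}^*(S)$, and conclude with the inequality $r_{k+z}^*(S) \le \opto$ from Eq.~(\ref{eq:radius-relation}). No gaps; your remarks on invoking the subset lemma with parameter $k+z$ and on the irrelevance of the weights match the paper's reasoning.
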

\begin{proof}
Consider any subset $S_i$ of the partition $S_1, \dots, S_\ell$ of
$S$.  By construction, we have that for each $s \in S_i$,
$d(s, p(s)) \le
(\hat{\epsilon}/2)\cdot r_{T_i^{k+z}}(S_i)$.  Since $S_i$ is a subset of
$S$, Lemma~\ref{lem:gmm-subset} ensures that $r_{T_i^{k+z}}(S_i) \le 2
r^*_{k+z}(S)$.  Hence, $d(s, p(s)) \le \hat{\epsilon} r^*_{k+z}(S)$.
Since  $r^*_{k+z}(S) \le  \opto$, as observed before in Eq.~\ref{eq:radius-relation},
we have $d(x, p(x)) \le
\hat{\epsilon}\cdot\opto$.
\end{proof}

Next, we characterize the quality of the solution returned
by \outliers{} when run on  $T$, the union of the weighted coresets,
and with a radius $r\geq \opto$.

\begin{lemma}\label{lem:outliers-approx}
\sloppy
For $r\geq \opto$, let $X,T' \subseteq T$ be the sets returned by
\outliers$(T,k,r,\hat{\epsilon})$, and define $S_{T'} = \{s\in S: p(s)\in T'\}$. Then,
\[
 d(t, X) \le (3+4\hat{\epsilon}) \cdot r
\quad \forall t \in T \setminus T'
  \]
and  $|S_{T'}| \le z$.
\end{lemma}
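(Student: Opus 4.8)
The plan is to prove the two assertions separately, both by contradiction, exploiting the fact that $r \ge \opto$ together with the proxy bound of Lemma~\ref{lem:proxy-outliers}, namely $d(s,p(s)) \le \hat{\epsilon}\cdot\opto \le \hat{\epsilon}\cdot r$ for all $s \in S$. For the first assertion, observe that every $t \in T \setminus T'$ was removed from $T'$ in some iteration when it landed in the set $E_x$ for the center $x$ selected in that iteration; by definition of $E_x$ this means $d(t,x) \le (3+4\hat{\epsilon})\cdot r$, and since $x \in X$ we immediately get $d(t,X) \le (3+4\hat{\epsilon})\cdot r$. This part is essentially a restatement of the construction of \outliers{}.

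The substantive part is showing $|S_{T'}| \le z$. Here I would set up a charging argument against the optimal solution. Let $O^* = \{o_1^*,\dots,o_k^*\}$ be an optimal set of $k$ centers for the $z$-outlier problem on $S$, with radius $\opto$, and let $Z^*$ be the (at most $z$) outliers it discards; so every $s \in S \setminus Z^*$ satisfies $d(s,O^*) \le \opto \le r$. Partition $S \setminus Z^*$ into the $k$ optimal clusters $C_1^*,\dots,C_k^*$ (each point to its nearest optimal center). The key claim will be: for each optimal center $o_i^*$ that is ``responsible'' for some point whose proxy survives in $T'$, the algorithm's greedy weight-maximizing choice guarantees that the iteration in which it picks a center covers (removes from $T'$) at least all the proxies of points in $C_i^*$ — so after $|X|$ iterations, the only surviving proxies in $T'$ correspond to points of $Z^*$, whence $|S_{T'}| \le |Z^*| \le z$.

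To make the claim precise I would argue as follows. Suppose at some iteration of \outliers{} the current uncovered set is $T'$ and consider any optimal cluster $C_i^*$ such that some point of $C_i^*$ has its proxy still in $T'$. Every point $v \in C_i^*$ has $d(v,o_i^*)\le r$ and $d(v,p(v)) \le \hat{\epsilon} r$, so $d(p(v),o_i^*) \le (1+\hat{\epsilon})r$; a fortiori the proxies of the still-uncovered points of $C_i^*$ all lie within distance $(1+\hat\epsilon)r \le (1+2\hat\epsilon)r$ of $o_i^*$, hence they all belong to $B_{o_i^*}$ in that iteration, and their total weight is at least the number of still-uncovered points of $S$ lying in $C_i^*$. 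Now the center $x$ chosen by the algorithm maximizes $\sum_{v\in B_x} w_v$ over all $t \in T$, so the weight it captures is at least that of $B_{o_i^*}$. Moreover any proxy $p(v)$ counted in $B_x$ has $d(p(v),x)\le (1+2\hat\epsilon)r$, and the underlying point $v$ has $d(v,p(v))\le\hat\epsilon r$, so… wait — I need to be careful: what I actually need is that the points $v$ whose proxies were counted in the weight of $x$ get removed, i.e.\ their proxies enter $E_x$. Any such proxy $p(v)$ satisfies $d(p(v),x) \le (1+2\hat\epsilon)r \le (3+4\hat\epsilon)r$, so indeed $p(v) \in E_x$ and $p(v)$ is removed from $T'$. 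Therefore each iteration of the loop removes from $T'$ a set of proxies whose backing points have aggregate count at least as large as that of the single most populous not-yet-fully-covered optimal cluster. A clean way to finish: I would show by induction on iterations that the set of points of $S\setminus Z^*$ whose proxy is still in $T'$ is covered by progressively fewer optimal clusters; after at most $k$ iterations every optimal cluster's points have been removed (because at each step we can ``assign'' the chosen $x$ to dispatch at least one remaining optimal cluster entirely — its whole $B_{o_i^*}$ weight is captured, so all its surviving proxies fall in $E_x$), so only proxies of points in $Z^*$ can remain, giving $|S_{T'}| \le |Z^*| \le z$. If the loop instead terminates with $|X|=k$ exactly, the same accounting applies; if it terminates early with $T'=\emptyset$ then trivially $|S_{T'}|=0$.

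The main obstacle I expect is the bookkeeping in the induction: one must argue that the greedy choice, which maximizes captured weight globally rather than picking an optimal center, can be charged so that after $j$ iterations the uncovered proxies are ``explained'' by at most $k-j$ optimal clusters. The subtlety is that a single greedy step might straddle two optimal clusters and only partially clear each, so the naive ``one cluster per iteration'' charging needs the standard fix: compare against the optimal cluster of maximum residual weight and use that the greedy weight is $\ge$ that maximum, so the residual total weight drops to at most a $(1-1/k)$-type fraction — or, more simply, observe that an optimal cluster $C_i^*$ that is \emph{entirely} contained in $E_x$ for some chosen $x$ is permanently cleared, and argue that whenever the algorithm has not yet cleared all $k$ optimal clusters, the ball $B_{o_i^*}$ of an uncleared one is a valid candidate whose full weight the greedy step meets or exceeds, forcing that step to clear at least one more optimal cluster's worth of surviving points. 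Getting the radii to line up ($(1+\hat\epsilon)r$ inside the $(1+2\hat\epsilon)r$ ball for candidacy, and $(1+2\hat\epsilon)r$ inside the $(3+4\hat\epsilon)r$ ball for removal) is routine once $r \ge \opto$ and $d(s,p(s))\le\hat\epsilon\opto$ are in hand.
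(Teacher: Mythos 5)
Your first claim is fine and matches the paper. For the second claim, the overall strategy (compare the greedy step against the optimal clustering via the triangle inequality) is the right family of argument, but there are two genuine gaps. First, a repairable one: the greedy comparison in \outliers{} is an $\argmax$ over $t \in T$, so the candidate ball you compare against must be centered at a point of $T$; the optimal center $o_i^*$ is a point of $S$ and in general not of $T$, so ``the weight it captures is at least that of $B_{o_i^*}$'' is not licensed. The paper's proof uses $B_{p(o_i)}$, the ball centered at the \emph{proxy} of the optimal center, and the extra hop $d(p(o_i),o_i)\le \hat{\epsilon}\cdot\opto$ is exactly why the candidate radius is $(1+2\hat{\epsilon})r$ rather than your $(1+\hat{\epsilon})r$.

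Second, and more seriously, the accounting you defer as ``bookkeeping'' is the actual content of the lemma, and both fixes you sketch fail. The weight-maximality of $x_i$ gives only that the \emph{weight} of $B_{x_i}$ is at least the number of surviving points of the chosen optimal cluster; it does not imply that those points' proxies lie in $E_{x_i}$ and get removed, so ``the greedy step clears at least one more optimal cluster's worth of surviving proxies'' is false as a set-containment claim. Your alternative, a $(1-1/k)$-type residual decrease, only yields coverage of a $1-(1-1/k)^k$ fraction of the non-outlier weight after $k$ iterations, which is not $|S|-z$. The paper resolves this with an explicit two-rule charging scheme: it orders the optimal centers adaptively; in the case where some surviving cluster already has a proxy inside an earlier $B_{x_j}$, every point of that cluster has its proxy in $E_{x_j}$ (a chain of five triangle-inequality hops giving $(3+4\hat{\epsilon})r$) and is charged to its own proxy; otherwise, the surviving points of the selected cluster are charged to the \emph{weight} of $B_{x_i}$ via the comparison with $B_{p(o_i)}$, and a separate argument shows no point of $T$ is ever charged beyond its weight (points of $B_{x_i}$ receive no Rule-1 charges in earlier or later iterations and no further Rule-2 charges). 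Without this charging-by-weight mechanism and the no-double-charging check, your induction does not go through.
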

\begin{proof}
\sloppy
The proof uses an argument akin to the one used for the
analysis of the sequential algorithm by \cite{Charikar2001} and later
adapted by \cite{MalkomesKCWM15} to the weighted coreset setting.
The first claim follows immediately from the workings of the algorithm, since 
each point in $T-T'$ belongs to some $E_x$, with $x\in X$. 
We are left to show that $|S_{T'}| \le z$.
Suppose first that $|X|<k$. In this case, it must be $T' = \emptyset$,
hence  $|S_{T'}|= 0$, and the proof follows. We now
 concentrate on the case $|X|=k$. Consider the $i$-th
iteration of the while loop of \outliers$(T,k,r,\hat{\epsilon})$
and define $x_i$ as the center of $X$ selected in the iteration, and 
$T'_i$ as the set $T'$ of uncovered points at the beginning
of the iteration. Recall that $x_i$ is the
point of $T$  which maximizes the cumulative weight of
the set $B_{x_i}$ of uncovered points in $T'_i$  at distance at most
$(1+2\hat{\epsilon})\cdot r$ from $x_i$, and that
the set $E_{x_i}$ of all uncovered points at distance 
at most $(3+4\hat{\epsilon})\cdot r$ from $x_i$ is removed from
$T'_i$  at the end of the iteration. 
We now show that 
\begin{equation} \label{eq:weight-outliers}
\sum_{i=1}^k \sum_{t \in E_{x_i}} w_t \ge |S|-z,
\end{equation}
which will immediately imply that $|S_{T'}| \le z$.
For this purpose, let $O$ be an optimal set of $k$ centers for 
the problem instance under consideration, and let
$Z$ be the set of at most $z$ outliers at distance 
greater than $\opto$ from $O$.
For each $o \in O$,
define 
$C_o \subseteq S \setminus Z$ as the set of nonoutlier points which
are closer to $o$ than to any other center of $O$, with ties
broken arbitrarily. To prove~(\ref{eq:weight-outliers}), 
it is sufficient
to exhibit an ordering
$o_1, o_2, \ldots, o_k$ of the centers in $O$ so that,
for every $1 \leq i \leq k$, it holds
\[
\sum_{j=1}^i \sum_{t \in E_{x_j}} w_t \ge |C_{o_1} \cup \dots \cup C_{o_i}|.
\]
The proof uses an inductive
charging argument to assign each point in 
$\bigcup_{j=1}^i C_{o_j}$ 
to a point in $\bigcup_{j=1}^i E_{x_j}$, where
each $t$ in the latter set will be in charge of at most
$w_t$ points.  We define two charging rules. A point can be
either charged to its own proxy (\emph{Rule 1}) or to another point
of $T$ (\emph{Rule 2}).

Fix some arbitrary $i$, with $1 \leq i \leq k$,
and assume, inductively, that the points in 
$C_{o_1} \cup \dots \cup C_{o_{i-1}}$ 
have been charged to points in $\bigcup_{j=1}^{i-1} E_j$
for some choice of distinct optimal centers
$o_1, o_2, \ldots, o_{i-1}$. We have  two cases. \\
{\bf Case 1.} \emph{There exists an optimal center $o$ still
unchosen such that 
there is a point $v\in C_{o}$ with $p(v) \in B_{x_j}$, 
for some $1 \leq j \leq i$.} We choose $o_i$ as one such center.
Hence $d(x_j,p(v)) \le (1 + 2\hat{\epsilon}) \cdot r$.
By repeatedly applying the triangle inequality 
we have that
for each $u \in C_{o_i}$
\begin{align*}
d(x_j,p(u)) \leq &
\;\; d(x_j,p(v)) + 
d(p(v),v) + 
d(v, o_i) +
d(o_i,u) + \\
& + d(u,p(u))  \leq (3 + 4\hat{\epsilon}) \cdot r
\end{align*}
hence,  $p(u) \in E_{x_j}$. Therefore
we can charge each point $u\in C_{o_i}$ to its proxy, by Rule
1. \\
{\bf Case 2.}
\emph{For each unchosen optimal center $o$ and each
$v \in C_o$, $p(v) \not\in \bigcup_{j=1}^i B_{x_j}$.}
We choose $o_i$ to be the unchosen optimal center which
maximizes the cardinality of 
$\{p(u) : u \in C_{o_i}\} \cap T'_i$.
We distinguish between points $u\in C_{o_i}$ with $p(u) \notin T'_i$,
hence $p(u) \in \bigcup_{j=1}^{i-1} E_{x_j}$, and those with
$p(u) \in T'_i$.  We charge each $u\in C_{o_i}$ with $p(u) \notin
T'_i$ to its own proxy by Rule 1.  As for the other points,
we now show that we can charge them to the points of
$B_{x_i}$.  To this purpose, we first observe that
$B_{p(o_i)}$ contains 
$\{p(u) : u \in C_{o_i}\} \cap T'_i$, since
for each $u \in C_{o_i}$ 
\[
\begin{aligned}
  d(p(o_i),p(u)) 
  &\leq 
  d(p(o_i),o_i) + 
  d(o_i,u) + 
  d(u,p(u)) \\
  &\leq (1 + 2\hat{\epsilon}) \cdot \opto \leq (1 + 2\hat{\epsilon}) \cdot r.
\end{aligned}
\]


%
Therefore the aggregate weight of $B_{p(o_i)}$ is at least 
$\left|\left\{u \in C_{o_i} : p(u) \in T'_i\right\}\right|$.
Since Iteration $i$ selects $x_i$ as the 
center such that $B_{x_i}$ has maximum aggregate weight,
we have that
\[
\sum_{t \in B_{x_i}} w_t 
\ge \sum_{z \in B_{p(o_i)}} w_z
\ge \left|\left\{u \in C_{o_i} : p(u) \in T'_i\right\}\right|,
\]
hence, the points in $B_{x_i}$ have enough weight to be charged with
each point $u \in C_{o_i}$ with $p(u) \in T'_i$.  
Figure~\ref{fig:case-2-charging-rule} illustrates the charging under Case 2.
\begin{figure}
  \centering
  \begin{tikzpicture}[scale=1.3]
    \tikzstyle{point}=[circle, fill=black, minimum size=5pt, inner sep=0pt]
    \tikzstyle{proxy}=[star,star points=5, fill=black, minimum size=8pt, inner sep=0pt]

    \def\centerarc(#1)(#2:#3:#4){($(#1)+({#4*cos(#2)},{#4*sin(#2)})$) arc (#2:#3:#4) }

    \def\proxypoint[#1](#2){
      (#2)node[proxy,#1]{} 
      \centerarc(#2)(0:360:0.3)
    }

    \node (E) at (-3,0) {};
    \node[point,label={[label distance=0pt]above:$o_i$}] (O) at (0.3,1) {};
    \node[proxy, label=$x_i$] (X) at (2.5,-0.5) {};

    \def\largedisk{\centerarc(E)(45:-30:3)}
    \def\optimaldisk{\centerarc(O)(0:360:1)}
    \def\greedydisk{\centerarc(X)(0:360:1.2)}

    \tikzfading[
      name=fade custom,
      top color=transparent!100,
      bottom color=transparent!100,
      right color=transparent!0,
    ]



    \draw[dashed] \optimaldisk; 
    \draw[dashed] \largedisk;
    \draw[dashed] \greedydisk ;

    \node[point] (covered) at (0.08,0.8) {};
    \node[point] (covered2) at (-0.3,0.3) {};
    \node[point] (uncovered) at (0.9,1.2) {};
    \node[point] (uncovered2) at (0.6,1.85) {};

    \node (proxycovered) at (-0.22, 0.8) {};
    \draw[dotted] \proxypoint[](proxycovered); 
    \draw[-{Latex[width=.8ex, length=1ex]}] 
      (covered) .. controls (-0.12,0.4) and (-0.15,0.5) .. (proxycovered);

    \node (proxycovered2) at (-0.5, 0.1) {};
    \draw[dotted] \proxypoint[](proxycovered2); 
    \draw[-{Latex[width=.8ex, length=1ex]}] 
      (covered2) .. controls (-0.2, 0.05) and (-0.3,0) .. (proxycovered2);

    \node (proxyuncovered) at (1,1.3) {};  
    \draw[dotted] \proxypoint[](proxyuncovered);       
    \draw[-{Latex[width=.8ex, length=1ex]}] (uncovered) arc (80:30:3);

    \node (proxyuncovered2) at (0.65,2.1) {};  
    \draw[dotted] \proxypoint[](proxyuncovered2);    
    \draw[-{Latex[width=.8ex, length=1ex]}] 
      (uncovered2) arc (175:250:3);

    \draw (1.5,2.1) node[right]{
      \parbox{8em}{\small points with their proxy not covered are charged to $B_{x_i}$ by Rule 2}
      };
    \draw (-0.7,-0.8) node[left]{\parbox{8em}{\small points with their proxy covered by $E_{x_j}$, for some $j < i$, are charged to their proxy by Rule 1}};


    \node[label=$E_{x_j}$] at (.1,-1.3) {};
    \node[label=$B_{x_i}$] at (1.15,-1) {};
    \node[label=$C_{o_i}$] at (0.7,-0.4) {};

  \end{tikzpicture}
  \caption{Application of charging rules in case 2 of the proof. Round points are points of $S$, whereas star-shaped points are proxy points in $T$. Arrows represent charging.}
  \label{fig:case-2-charging-rule}
\end{figure}
Note that the points of $B_{x_i}$ did not receive any charging by Rule
1 in previous iterations, since they are uncovered at the beginning of
Iteration $i$, 
and will not
receive chargings by Rule 1 in subsequent iterations, since $B_{x_i}$
does not intersect the set $C_o$ of any optimal center $o$ yet to be
chosen.  Also, no further charging to points of $B_{x_i}$ by Rule 2
will happen in subsequent iterations, since Rule 2
will only target sets $B_{x_h}$ with $h > i$. These observations
ensure that any point of $T$ receives charges through either Rule 1 or
Rule 2, but not both, and never in excess of its weight, and the proof
follows.
\end{proof}

The following lemma bounds the size of $T$, the union of
the weighted coresets.
\begin{lemma}\label{lem:outliers-size}
If $S$ belongs to a metric space of doubling dimension
$D$, then 
\[
|T| \le \ell\cdot (k+z) \cdot \left(
\frac{4}{\hat{\epsilon}}  
\right)^D
\]
\end{lemma}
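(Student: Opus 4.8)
The statement is the exact analogue of Lemma~\ref{lem:k-center-size}, with $k$ replaced by $k+z$ and $\epsilon$ replaced by $\hat{\epsilon}$, so the plan is to replay that argument verbatim on each block $S_i$. First I would fix an index $i\in[1,\ell]$ and reduce the claim to an upper bound on $\tau_i$, the number of \gmm iterations performed on $S_i$ before the stopping condition $r_{T_i^{\tau_i}}(S_i)\le(\hat{\epsilon}/2)\cdot r_{T_i^{k+z}}(S_i)$ is met; since $|T_i|=\tau_i$, bounding $\tau_i$ by $(k+z)(4/\hat{\epsilon})^D$ for every $i$ and summing yields the lemma.

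The core of the argument is a covering-plus-pigeonhole step. Consider the clustering of $S_i$ induced by the $k+z$ centers in $T_i^{k+z}$: each of its $k+z$ clusters lies in a ball of radius $r_{T_i^{k+z}}(S_i)$, and by the doubling-dimension property (applied twice, as noted in Section~\ref{sec:prelim}) each such ball is covered by at most $(4/\hat{\epsilon})^D$ balls of radius $\le(\hat{\epsilon}/4)\cdot r_{T_i^{k+z}}(S_i)$, giving a total of $h=(k+z)(4/\hat{\epsilon})^D$ small balls covering $S_i$. Now run $h$ iterations of \gmm on $S_i$, let $T_i^h$ be the returned centers and $x\in S_i$ the farthest point of $S_i$ from $T_i^h$. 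The greedy selection guarantees that all $h+1$ points of $T_i^h\cup\{x\}$ are pairwise at distance $\ge r_{T_i^h}(S_i)$; since two of them must fall in a common small ball, the triangle inequality gives $r_{T_i^h}(S_i)\le 2\cdot(\hat{\epsilon}/4)\cdot r_{T_i^{k+z}}(S_i)=(\hat{\epsilon}/2)\cdot r_{T_i^{k+z}}(S_i)$. Hence after $h$ iterations the stopping condition is already satisfied, so $\tau_i\le h=(k+z)(4/\hat{\epsilon})^D$.

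I do not anticipate a real obstacle here: the only point that needs a line of care is that the stopping threshold is measured against $r_{T_i^{k+z}}(S_i)$ (the radius after $k+z$ centers), so the covering must be built from the $k+z$ clusters of $T_i^{k+z}$ rather than from $k$ clusters — this is exactly why the bound carries a factor $(k+z)$ instead of $k$. Summing $|T_i|\le(k+z)(4/\hat{\epsilon})^D$ over $i\in[1,\ell]$ gives $|T|\le\ell\cdot(k+z)\cdot(4/\hat{\epsilon})^D$, as claimed.
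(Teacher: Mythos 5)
Your proposal is correct and follows essentially the same route as the paper, which proves this lemma by explicitly repeating the argument of Lemma~\ref{lem:k-center-size} with the doubling property applied to the $k+z$ clusters induced by $T_i^{k+z}$ on each $S_i$ (your covering-plus-pigeonhole step is exactly that argument, with the stopping condition correctly measured against $r_{T_i^{k+z}}(S_i)$).
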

\begin{proof}
The proof proceeds similarly to the one of
Lemma~\ref{lem:k-center-size}, with the understanding that the
definition of doubling dimension is applied to each of the $(k+z)$
clusters induced by the points of $T_i^{k+z}$ on $S_i$.
\end{proof}

Finally, we 
state the main result of this subsection.
\begin{theorem}\label{thm:outliers-mr}
Let $0 < \epsilon \le 1$.
If the points of $S$ belong to a metric space of doubling dimension
$D$, then, when run with $\hat{\epsilon}=\epsilon/6$,
the above 2-round MapReduce algorithm
computes a
$(3+\epsilon)$-approximation for the $k$-center problem 
with $z$ outliers with local memory
$M_L = \BO{|S|/\ell + \ell\cdot (k+z) \cdot (24/\epsilon)^D}$
and linear aggregate memory.
\end{theorem}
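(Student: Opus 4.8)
The plan is to run the two guarantees of Lemma~\ref{lem:outliers-approx} on the single invocation \outliers$(T,k,\tilde r_{\rm min},\hat{\epsilon})$ that produces the output, after first pinning down $\tilde r_{\rm min}$ relative to $\opto$, and then to turn an ``uncovered proxy'' bound on $T$ into a radius bound on $S$ via the proxy inequality of Lemma~\ref{lem:proxy-outliers}. Throughout, recall that $w_t$ counts the points of $S$ whose proxy is $t$, so that for any $T''\subseteq T$ one has $\sum_{t\in T''}w_t=|\{s\in S:p(s)\in T''\}|$.

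\emph{Bounding $\tilde r_{\rm min}$.} First I would check that $r=\opto$ is a feasible radius for the second-round search. Indeed, by Lemma~\ref{lem:outliers-approx} applied with $r=\opto$, the set $T'$ returned by \outliers$(T,k,\opto,\hat{\epsilon})$ satisfies $|S_{T'}|\le z$, hence its aggregate weight $\sum_{t\in T'}w_t=|S_{T'}|\le z$. Therefore $r_{\rm min}\le\opto$, so the estimate returned by the search obeys $\tilde r_{\rm min}\le(1+\delta)\,r_{\rm min}\le(1+\delta)\opto$ with $\delta=\hat{\epsilon}/(3+4\hat{\epsilon})$. Moreover, by the very definition of the search, \outliers$(T,k,\tilde r_{\rm min},\hat{\epsilon})$ returns a set $T'$ of aggregate weight at most $z$, so $|S_{T'}|\le z$ for the set $S_{T'}=\{s\in S:p(s)\in T'\}$ associated with the actual output.

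\emph{Approximation ratio.} Let $X,T'$ be the output of \outliers$(T,k,\tilde r_{\rm min},\hat{\epsilon})$. The at most $z$ points of $S_{T'}$ are designated as the discarded outliers. For any $s\in S\setminus S_{T'}$ we have $p(s)\in T\setminus T'$, and by construction of \outliers each point of $T\setminus T'$ lies in some $E_x$ with $x\in X$ (this is the first claim in the proof of Lemma~\ref{lem:outliers-approx}, which requires no assumption on the radius), so
\[
d(p(s),X)\le (3+4\hat{\epsilon})\,\tilde r_{\rm min}\le (3+4\hat{\epsilon})(1+\delta)\opto=(3+5\hat{\epsilon})\opto,
\]
the last equality being exactly the choice $\delta=\hat{\epsilon}/(3+4\hat{\epsilon})$. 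Combining with $d(s,p(s))\le\hat{\epsilon}\opto$ from Lemma~\ref{lem:proxy-outliers} via the triangle inequality,
\[
d(s,X)\le d(s,p(s))+d(p(s),X)\le (3+6\hat{\epsilon})\opto.
\]
Setting $\hat{\epsilon}=\epsilon/6$ gives $d(s,X)\le(3+\epsilon)\opto$ for every $s\in S\setminus S_{T'}$, and since $|S_{T'}|\le z$ this means $r_{X,Z_X}(S)\le(3+\epsilon)\,r^*_{k,z}(S)$, i.e., $X$ is a $(3+\epsilon)$-approximate solution.

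\emph{Memory, and the main difficulty.} For the space bound: in the first round each reducer holds its $|S|/\ell$ input points and, by Lemma~\ref{lem:outliers-size} with $\hat{\epsilon}=\epsilon/6$ (so $4/\hat{\epsilon}=24/\epsilon$), produces a coreset $T_i$ of size $\BO{(k+z)(24/\epsilon)^D}$; in the second round one reducer gathers $T=\bigcup_{i=1}^\ell T_i$ of total size at most $\ell(k+z)(24/\epsilon)^D$ by the same lemma, and each step of the binary/geometric search plus each run of \outliers uses only $\BO{|T|}$ space, using the median-finding algorithm of~\cite{MunroP80} to avoid storing all pairwise distances. Hence $M_L=\BO{|S|/\ell+\ell(k+z)(24/\epsilon)^D}$, with linear aggregate memory. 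The only genuinely delicate point is the bookkeeping around $\tilde r_{\rm min}$: establishing $r_{\rm min}\le\opto$ (this is precisely where Lemma~\ref{lem:outliers-approx} is invoked) and then propagating the multiplicative slack $(1+\delta)$ so that the final constant is exactly $3+6\hat{\epsilon}$; everything else is a direct assembly of the already-proved lemmas.
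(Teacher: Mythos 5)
Your proposal is correct and follows essentially the same route as the paper's proof: use Lemma~\ref{lem:outliers-approx} to certify that $r=\opto$ passes the weight test so that $\tilde r_{\rm min}\le(1+\delta)\opto$, then combine the $(3+4\hat{\epsilon})\tilde r_{\rm min}$ coverage bound with the proxy bound of Lemma~\ref{lem:proxy-outliers} via the triangle inequality to get $(3+6\hat{\epsilon})\opto=(3+\epsilon)\opto$, and invoke Lemma~\ref{lem:outliers-size} for the memory bound. Your write-up is in fact slightly more careful than the paper's, e.g.\ in noting that the coverage claim for $T\setminus T'$ holds for any radius (so it applies at $\tilde r_{\rm min}$ even if $\tilde r_{\rm min}<\opto$) and that the weight-$z$ guarantee for the output run comes from the search criterion itself.
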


\begin{proof}
The result of  Lemma~\ref{lem:outliers-approx} 
combined with the stipulated tolerance of the
search
performed in the second round of the 
algorithm implies that the radius discovered by the search 
is $\tilde{r}_{\rm min} \le \opto (1+\delta)$ with $\delta=\hat{\epsilon}/(3+4\hat{\epsilon})$.
Also, by the triangle inequality, the distance between each non-outlier 
point in $S$ and its closest center will be at most
$\hat{\epsilon}\opto + (3+4\hat{\epsilon}) \opto(1+\delta) \leq
 (3+6 \hat{\epsilon}) \opto \leq  (3+\epsilon)\opto$,
which proves the approximation bound. 
The bound on $M_L$ follows since
in the first round each reducer needs enough memory
to store $|S|/\ell$ points of
the input, while in the second round
the
reducer computing the final solution requires enough 
memory to store the union of the $\ell$ coresets, 
which, by Lemma~\ref{lem:outliers-size}, has
size $\BO{(k+z)(4/\hat{\epsilon})^D}
= \BO{(k+z)(24/\epsilon)^D}$ each.  Also,
globally, the reducers need only sufficient
memory to store the input, hence $M_A = \BO{|S|}$.
\end{proof}

By setting $\ell = \BT{\sqrt{|S|/(k+z)}}$ in the above theorem 
we obtain:
\begin{corollary}
\sloppy
Our 2-round MapReduce algorithm computes a
$(3+\epsilon)$-approximation for the $k$-center problem with
$z$ outliers, 
with local memory $M_L = \BO{\sqrt{|S|(k+z)}(24/\epsilon)^D)}$ and linear
aggregate memory. For constant $\epsilon$ and $D$, 
the local memory bound becomes $M_L = \BO{\sqrt{|S|(k+z)}}$.
\end{corollary}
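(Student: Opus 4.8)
The plan is to derive the corollary directly from Theorem~\ref{thm:outliers-mr} by choosing the number $\ell$ of partition blocks so as to balance the two terms in the local-memory bound $M_L = \BO{|S|/\ell + \ell\cdot(k+z)\cdot(24/\epsilon)^D}$. The first term is decreasing in $\ell$ and the second is increasing in $\ell$; equating them would suggest $\ell = \BT{\sqrt{|S|/((k+z)(24/\epsilon)^D)}}$. However, we want the algorithm to remain oblivious to the doubling dimension $D$, which is generally unknown and hard to estimate, so I would instead fix the slightly larger value $\ell = \BT{\sqrt{|S|/(k+z)}}$, a quantity that depends only on parameters available to the algorithm.

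With this choice I would substitute into the bound of Theorem~\ref{thm:outliers-mr}: the first term becomes $|S|/\ell = \BT{\sqrt{|S|(k+z)}}$, while the second becomes $\ell\cdot(k+z)\cdot(24/\epsilon)^D = \BT{\sqrt{|S|(k+z)}\,(24/\epsilon)^D}$. Since $(24/\epsilon)^D \ge 1$ for every $\epsilon \le 1$, the second term dominates, and hence $M_L = \BO{\sqrt{|S|(k+z)}\,(24/\epsilon)^D}$. The approximation ratio $(3+\epsilon)$, the number of rounds (two), and the linear aggregate memory $M_A = \BO{|S|}$ all carry over verbatim from Theorem~\ref{thm:outliers-mr}. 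The second sentence of the corollary then follows by noting that for constant $\epsilon$ and $D$ the factor $(24/\epsilon)^D$ is itself a constant.

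The one point I would verify explicitly is that each block $S_i$ is large enough to run \gmm\ for at least $k+z$ iterations, i.e.\ that $|S|/\ell = \BOM{k+z}$; with $\ell = \BT{\sqrt{|S|/(k+z)}}$ this is equivalent to $|S| = \BOM{k+z}$, which holds because $k+z < |S|$ by assumption. Apart from this sanity check, the argument is essentially a one-line substitution, so I do not expect a real obstacle; the only mild subtlety worth flagging is the deliberate over-provisioning of $\ell$ (disregarding the $(24/\epsilon)^D$ factor when balancing), which is precisely what keeps the algorithm oblivious to $D$, at the modest cost of carrying $(24/\epsilon)^D$ rather than its square root into the space bound.
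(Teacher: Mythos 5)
Your proposal is correct and follows exactly the paper's route: the corollary is obtained by plugging $\ell = \BT{\sqrt{|S|/(k+z)}}$ into the local-memory bound of Theorem~\ref{thm:outliers-mr}, with the approximation ratio and aggregate memory carrying over unchanged. Your side remark about deliberately not balancing against the $(24/\epsilon)^D$ factor to stay oblivious to $D$ matches the paper's own remark that knowing $D$ would save a factor $\sqrt{(24/\epsilon)^D}$ by choosing a smaller $\ell$.
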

\sloppy {\bf Improved sequential algorithm.}  A simple analysis implies that, by setting
$\ell=1$, our MapReduce strategy for the $k$-center problem with $z$
outliers yields an efficient sequential $(3+\epsilon)$-approximation
algorithm whose running time is $\BO{|S||T|+k |T|^2 \log |T|}$, where
$|T|=(k+z)(24/\epsilon)^D$, is the coreset size. For a wide range of
values of $k,z, \epsilon$ and $D$ this yields a substantially improved
performance over the $\BO{k |S|^2 \log |S|}$-time state-of-the-art
algorithm of \cite{Charikar2001}, at the expense of a negligibly worse
approximation.

\subsubsection{Higher space efficiency through randomization}
\label{sec:improved-space}
The analysis of very noisy datasets might 
require setting the number $z$ of outliers much larger than $k$, while still 
$o(|S|)$. In this circumstance, the size of the 
union of the coresets
$T$ is proportional to $\sqrt{|S|z}$, and may turn out too large for practical
purposes, due to the large local memory requirements and to the
running time of the cubic sequential approximation algorithm run on
$T$ in the second round, which may become the real performance
bottleneck of the entire algorithm.  In this subsection, we show that
this drawback can be significantly ameliorated by simply partitioning
the pointset at random in the first round, at the only expense of
probabilistic rather than deterministic guarantees on the resulting
space and approximation guarantees. We say that an event
related to a dataset $S$ occurs \emph{with high probability} $p$ if 
$p \geq 1-1/|S|^c$, for some constant $c \geq 1$.

The randomized variant of the algorithm works as follows.  In the first
round, the input set $S$ is partitioned into $\ell$ subsets $S_i$,
with $1 \leq i \leq \ell$, by assigning each point to a random subset
chosen  uniformly and  independently of the other points. Let
$z'=6((z/\ell)+\log_2 |S|)$ and observe that, for large $z$ and $\ell$,
we have that $z' \ll z$.  Then, in parallel on each partition $S_i$, \gmm is run to
yield a set $T_i^{\tau_i}$ of $\tau_i$ centers, where $\tau_i \geq
k+z'$ is the smallest value such that $r_{T_i^{\tau_i}}(S_i) \le
(\hat{\epsilon}/2) \cdot r_{T_i^{k+z'}}(S_i)$.  Define the coreset
  $T_i=T_i^{\tau_i}$ and, again, for each point $s \in S_i$  define
  its \emph{proxy} $p(s)$ to be the point of $T_i$ closest to $s$.
The rest of the algorithm is exactly as before using these new
   $T_i$'s.

The analysis proceeds as follows.  Consider an optimal solution of the
$k$-center problem with $z$ outliers for $S$, and let $O=\{o_1, o_2, \ldots,
o_k\}$ be the set of $k$ centers and $Z_O$ the set of $z$ outliers, that
is the $z$ points of $S$ most distant from $O$. Recall that any point
of $S\setminus Z_O$ is at distance at most $\opto$ from $O$.
The following lemma states that the outliers (set $Z_O$) are well
distributed among the $S_i$'s.

\begin{lemma}\label{lem:occupancy}
With high probability, each $S_i$ contains
no more than $z'=6((z/\ell)+\log_2 |S|)$ points of $Z_O$.
\end{lemma}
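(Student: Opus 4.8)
**

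The plan is to apply a standard Chernoff-bound argument on the occupancy of each subset $S_i$ by the points of the fixed outlier set $Z_O$. Since $|Z_O| \le z$ and each point of $S$ is assigned to one of the $\ell$ subsets independently and uniformly at random, for a fixed $i$ the random variable $Y_i = |S_i \cap Z_O|$ is a sum of $|Z_O|$ independent indicator variables, each equal to $1$ with probability $1/\ell$. Hence $\mathbb{E}[Y_i] = |Z_O|/\ell \le z/\ell =: \mu$.

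First I would invoke a multiplicative Chernoff bound of the form $\Pr[Y_i > (1+\beta)\mu'] \le \exp(-\beta^2 \mu'/3)$ valid for an upper bound $\mu' \ge \mathbb{E}[Y_i]$ and $\beta \le 1$, together with the complementary tail bound $\Pr[Y_i > t] \le 2^{-t}$ which holds for $t$ sufficiently larger than the mean (e.g.\ $t \ge 6\mu'$). The target threshold is $z' = 6((z/\ell) + \log_2|S|) = 6\mu + 6\log_2|S|$. I would argue that $z'$ exceeds $6\mathbb{E}[Y_i]$ by the additive slack $6\log_2|S|$, which is exactly what is needed to drive the failure probability for a single $S_i$ below $1/|S|^2$: informally, the "$6\times$ the mean'' part absorbs the multiplicative fluctuation and the extra additive $6\log_2|S|$ forces the exponent to be at least $\log|S|^2$ in the regime where the mean is small. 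Concretely, one shows $\Pr[Y_i > z'] \le 2^{-\log_2|S|^2} = 1/|S|^2$.

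Finally, I would take a union bound over the $\ell \le |S|$ subsets, so that the probability that \emph{some} $S_i$ contains more than $z'$ points of $Z_O$ is at most $\ell/|S|^2 \le 1/|S|$, which is exactly the high-probability guarantee $p \ge 1 - 1/|S|^c$ with $c = 1$ as defined just above the lemma statement. This completes the argument, since $Z_O$ is a fixed set determined by the (data-dependent but randomness-independent) optimal solution, so the randomness in the partition is independent of which points lie in $Z_O$.

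The only mildly delicate point — and the step I would be most careful about — is choosing the right form of the Chernoff bound so that the clean closed-form threshold $z' = 6((z/\ell) + \log_2|S|)$ comes out with the constant $6$ rather than some uglier constant, and making sure the bound is stated in a way that is uniformly valid whether $z/\ell$ is large (so the multiplicative deviation term dominates) or $z/\ell$ is small, even zero (so the additive $\log|S|$ term alone must supply the concentration). Using the ``$\Pr[Y > R] \le 2^{-R}$ for $R \ge 6\mathbb{E}[Y]$'' variant of the Chernoff bound handles both regimes at once and yields the constant $6$ directly, so I would phrase the proof around that inequality.
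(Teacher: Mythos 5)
Your proposal is correct and follows essentially the same route as the paper: the inequality $\Pr[Y_i \ge R] \le 2^{-R}$ for $R \ge 6\,\mathbb{E}[Y_i]$ that you settle on is precisely the Chernoff bound (4.3) of Mitzenmacher--Upfal that the paper invokes, applied to the binomial occupancy variables $|S_i \cap Z_O|$ with threshold $z' = 6((z/\ell)+\log_2|S|)$, followed by a union bound over the $\ell \le |S|$ subsets. The only difference is bookkeeping: the paper states the sharper failure probability $1/|S|^5$ (from $2^{-z'} \le |S|^{-6}$ times $\ell \le |S|$), while your looser $1/|S|$ still meets the paper's definition of high probability.
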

\begin{proof}
The result follows by applying   Chernoff bound  (4.3) of 
\cite{MitzemacherU17} and the union bound, which yield
that the stated event occurs with probability at least $1-1/|S|^5$.
\end{proof}

The rest of the analysis mimics
the one of the deterministic version.
\begin{lemma}\label{lem:condensed}
The statements of both Lemmas~\ref{lem:proxy-outliers}
and~\ref{lem:outliers-approx}
hold with high probability. 
\end{lemma}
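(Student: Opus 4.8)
The plan is to reduce both statements to the single high-probability event of Lemma~\ref{lem:occupancy} --- that every $S_i$ contains at most $z'=6((z/\ell)+\log_2|S|)$ points of the optimal outlier set $Z_O$ --- and then to argue that, conditioned on this event, both conclusions hold deterministically, exactly as in the proofs of Lemmas~\ref{lem:proxy-outliers} and~\ref{lem:outliers-approx}.

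For the analog of Lemma~\ref{lem:proxy-outliers}, the only step of the deterministic proof that exploits the partition is the inequality $r_{T_i^{k+z}}(S_i)\le 2\,r^*_{k+z}(S)$ obtained from Lemma~\ref{lem:gmm-subset}; here the coreset stops after $\tau_i\ge k+z'$ iterations, so I would replace it by the claim that, conditioned on the event of Lemma~\ref{lem:occupancy}, $r_{T_i^{k+z'}}(S_i)\le 2\,\opto$ for every $i$. This is proved by mimicking Lemma~\ref{lem:gmm-subset}: if the farthest point $y$ of $S_i$ from $T_i^{k+z'}$ were at distance $>2\,\opto$, the greedy rule of \gmm would force the $k+z'+1$ points of $T_i^{k+z'}\cup\{y\}$ to be pairwise at distance $>2\,\opto$; assigning each of those points that lies in $S_i\setminus Z_O$ to its closest center of a fixed optimal solution $O$ and using the triangle inequality shows that no two of them share an optimal center, so at most $k$ of them lie outside $Z_O$ and at least $z'+1$ lie in $Z_O\cap S_i$, contradicting Lemma~\ref{lem:occupancy}. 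Since the \gmm radius is non-increasing and $\tau_i\ge k+z'$, the stopping condition then yields $d(s,p(s))\le r_{T_i^{\tau_i}}(S_i)\le (\hat{\epsilon}/2)\,r_{T_i^{k+z'}}(S_i)\le \hat{\epsilon}\,\opto$, which is precisely the claim of Lemma~\ref{lem:proxy-outliers}.

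For the analog of Lemma~\ref{lem:outliers-approx}, I would simply check that its proof never uses anything about how the $S_i$'s were formed beyond (i) the proxy bound $d(s,p(s))\le\hat{\epsilon}\,\opto$, (ii) the weights $w_t$ counting the points with proxy $t$, and (iii) $r\ge\opto$ together with the geometry of one fixed optimal pair $(O,Z_O)$ for the instance on $S$ (used to define the clusters $C_o$ and to run the inductive charging to the sets $E_{x_j}$ and $B_{x_i}$). All the triangle-inequality chains of Cases~1 and~2, the containment $B_{p(o_i)}\supseteq\{p(u):u\in C_{o_i}\}\cap T'_i$, and the accounting that no point of $T$ is charged beyond its weight go through verbatim. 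Since (i) now holds under the event of Lemma~\ref{lem:occupancy} by the previous paragraph, while (ii)--(iii) are unchanged, both $d(t,X)\le(3+4\hat{\epsilon})\,r$ for all $t\in T\setminus T'$ and $|S_{T'}|\le z$ follow with high probability.

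The only genuinely new ingredient --- and the step I expect to require the most care --- is the deterministic bound $r_{T_i^{k+z'}}(S_i)\le 2\,\opto$ conditioned on the occupancy event; the rest is the routine task of confirming that the earlier arguments relied on the arbitrariness, rather than any special structure, of the partition. A minor bookkeeping point is that we condition on a single event (Lemma~\ref{lem:occupancy} already absorbs the union bound over the $\ell$ subsets), so no further union bound is needed and the probability guarantee of Lemma~\ref{lem:condensed} inherits the $1-1/|S|^5$ bound of Lemma~\ref{lem:occupancy}.
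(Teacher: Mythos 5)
Your proposal is correct and takes essentially the same route as the paper's proof: condition on the occupancy event of Lemma~\ref{lem:occupancy}, establish $r_{T_i^{k+z'}}(S_i)\le 2\,\opto$ by the \gmm greedy-separation plus pigeonhole argument over the optimal clusters (the paper states this directly, you phrase it as a contradiction, which is equivalent), and then note that the proof of Lemma~\ref{lem:outliers-approx} goes through verbatim once the proxy bound $d(s,p(s))\le\hat{\epsilon}\,\opto$ is in place.
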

\begin{proof}
We first prove that, with high probability, for each for each $s \in
S$, $d(s, p(s)) \le \hat{\epsilon}\cdot\opto$ (same as
Lemma~\ref{lem:proxy-outliers}). Consider $O$ and $Z_O$.
We condition on the event that each $S_i$ contains at most $z'$
points of $Z_O$, which, by Lemma~\ref{lem:occupancy}, 
occurs with high probability. Focus on an arbitrary subset $S_i$.
For $1 \leq j \leq \ell$, let $C_j$ be the set of points of
$S\setminus Z_O$ whose closest optimal center is $o_j$, and let
$C_j(i) = C_j \cap S_i$. Consider the set $T_i^{k+z'}$ of centers
determined by the first $k+z'$ iterations of the \gmm algorithm
and let $x \in S_i$ be the farthest point of $S_i$ from
$T_i^{k+z'}$. By arguing as in the proof of Lemma~\ref{lem:k-center-size},
it can be shown that any two points in
$T_i^{k+z'} \cup \{x\}$ are at distance at least
$r_{T_i^{k+z'}}(S_i)$ from one another and since two of these
points must belong to the same $C_j(i)$ for some $j$,
by the triangle inequality we have that   
\[
r_{T_i^{k+z'}}(S_i) \le 2 \opto.
\]
Recall that the \gmm algorithm on $S_i$ is stopped
at the first iteration $\tau_i$ such that
$r_{T_i^{\tau_i}}(S_i) \leq
(\hat{\epsilon}/2) \cdot r_{T_i^{k+z'}}(S_i)$, hence
\[
r_{T_i^{\tau_i}}(S_i) \leq 
(\hat{\epsilon}/2) \cdot r_{T_i^{k+z'}}(S_i) \leq
(\hat{\epsilon}/2) \cdot 2 \opto =
\hat{\epsilon} \cdot \opto.
\]
The desired bound on $d(s,p(s))$ immediately follows. 
Conditioning on this bound, the proof of
Lemma~\ref{lem:outliers-approx} can be repeated identically,
hence the stated property holds. 
\end{proof}
By repeating the same argument used in Lemma~\ref{lem:outliers-size}, one can easily
argue that, if $S$ belongs to a metric space of doubling dimension
$D$, then 
the size of the weighted coreset $T$ is
\[
|T| \le \ell\cdot (k+z') \cdot \left(
\frac{4}{\hat{\epsilon}}  
\right)^D.
\]
\sloppy
This bound, together with the results of the preceding lemma,
immediately implies the analogous of Theorem~\ref{thm:outliers-mr}
stating that, with high probability, the randomized algorithm computes
a $(3+\epsilon)$-approximation for the $k$-center problem with $z$
outliers with local memory $M_L = \BO{|S|/\ell + \ell\cdot (k+z')
  \cdot (24/\epsilon)^D}$ and linear aggregate memory.  Observe that
$z$ is now replaced by (the much smaller) $z'$ in the local memory
bound. 

By choosing $\ell = \BT{\sqrt{|S|/(k+\log |S|)}}$ we obtain:
\begin{corollary}
\sloppy
With high probability, our 2-round MapReduce algorithm computes a
$(3+\epsilon)$-approximation for the $k$-center problem with
$z$ outliers, 
with local 
memory
$M_L = \BO{\left(\sqrt{|S|(k+\log |S|)}+z\right)(24/\epsilon)^D}$ and linear
aggregate memory. For constant $\epsilon$ and $D$, 
the local memory bound becomes  $M_L = \BO{\sqrt{|S|(k+\log |S|)}+z}$ 
\end{corollary}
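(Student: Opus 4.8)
The plan is simply to substitute the stipulated value $\ell = \BT{\sqrt{|S|/(k+\log |S|)}}$ into the bound established immediately above, namely that, with high probability, the randomized algorithm computes a $(3+\epsilon)$-approximation with local memory $M_L = \BO{|S|/\ell + \ell\cdot (k+z') \cdot (24/\epsilon)^D}$ and linear aggregate memory, where $z'=6((z/\ell)+\log_2 |S|)$. The approximation guarantee, the high-probability qualifier, and the linearity of the aggregate memory are inherited verbatim, so the only work is to check that this choice of $\ell$ turns the local-memory expression into the claimed $M_L = \BO{\left(\sqrt{|S|(k+\log |S|)}+z\right)(24/\epsilon)^D}$. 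I would also note, as in the earlier corollaries, that $\ell = \BT{\sqrt{|S|/(k+\log |S|)}}$ is a legitimate choice (at least $1$ and at most $|S|$) so that the partition into $\ell$ subsets is well defined.

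For the first summand, $|S|/\ell = \BT{|S|/\sqrt{|S|/(k+\log |S|)}} = \BT{\sqrt{|S|(k+\log |S|)}}$, which is absorbed into the target bound (with room to spare, since the target carries the extra factor $(24/\epsilon)^D \ge 1$). For the second summand, I would expand $\ell(k+z') = \ell k + 6z + 6\ell\log_2 |S|$. Here $\ell k \le \ell(k+\log |S|) = \sqrt{|S|(k+\log |S|)}$, and likewise $\ell\log_2 |S| \le \ell(k+\log |S|) = \sqrt{|S|(k+\log |S|)}$, so $\ell(k+z') = \BO{\sqrt{|S|(k+\log |S|)}+z}$, and therefore $\ell(k+z')(24/\epsilon)^D = \BO{\left(\sqrt{|S|(k+\log |S|)}+z\right)(24/\epsilon)^D}$. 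Summing the two contributions gives exactly the claimed bound; setting $\epsilon$ and $D$ to constants makes $(24/\epsilon)^D = \BO{1}$ and yields the simplified form $M_L = \BO{\sqrt{|S|(k+\log |S|)}+z}$.

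The only point requiring a moment's care — and the mild obstacle, if any — is the bookkeeping around $z'$: the second-round reducer must store the union of all $\ell$ coresets, i.e.\ $\ell$ copies of a set of size $(k+z')(4/\hat{\epsilon})^D$, and it is precisely the $\ell z'$ term that could a priori reintroduce a $\sqrt{|S|z}$-type dependence. The key observation is that $\ell z' = 6z + 6\ell\log_2 |S|$ \emph{decouples} the outlier count $z$, which now appears only additively and free of any $\sqrt{|S|}$ factor, from the $\ell\log_2 |S|$ overhead, which is dominated by $\ell k + \ell\log_2 |S| \le 2\sqrt{|S|(k+\log |S|)}$ for the chosen $\ell$. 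Once this is noted, the corollary follows immediately from the analogue of Theorem~\ref{thm:outliers-mr} for the randomized algorithm.
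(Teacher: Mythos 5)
Your proposal is correct and follows the same route as the paper: the corollary is obtained simply by substituting $\ell = \BT{\sqrt{|S|/(k+\log |S|)}}$ into the bound $M_L = \BO{|S|/\ell + \ell\cdot (k+z')\cdot (24/\epsilon)^D}$ with $z'=6((z/\ell)+\log_2 |S|)$, while the approximation guarantee, high-probability qualifier, and linear aggregate memory carry over unchanged. Your explicit expansion $\ell(k+z') = \ell k + 6z + 6\ell\log_2 |S|$ and the observation that the $z$ term decouples from any $\sqrt{|S|}$ factor is exactly the calculation the paper leaves implicit.
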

With respect to the deterministic version, 
for large values of $z$ a substantial improvement
in the local memory requirements is achieved. \\[0.1cm]
{\bf Remark.} Thanks to the incremental nature of \gmm, our
coreset-based MapReduce algorithms for the $k$-center problem, both
without and with outliers, need not know the doubling dimension $D$ of
the underlying metric space in order to attain the claimed performance
bounds. This is a very desirable property, since, in general, $D$ may
not be known in advance. Moreover, if $D$ were known,
a factor $\sqrt{(c/\epsilon)^D}$ in local memory
(where $c=4$ for $k$-center, and $c=24$ for $k$-center with
$z$ outliers) could be saved
by setting  $\ell$ to be a factor
$\BT{\sqrt{(c/\epsilon)^D}}$ smaller.

\section{Streaming algorithm for $k$-center with $z$ outliers}
\label{sec:streaming}

As mentioned in the introduction, in the Streaming setting we will
only consider the $k$-center problem with $z$ outliers. Consider an
instance $S$ of the problem and fix a precision parameter
$\hat{\epsilon} \in (0, 1]$.  Suppose that the points of $S$ belong to
  a metric space of known doubling dimension $D$.  Our Streaming
  algorithm also adopts a coreset-based approach.  Specifically, in a
  pass over the stream of points of $S$ a suitable weighted coreset
  $T$ is selected and stored in the working memory. Then, at the end
  of the pass, the final set of centers is determined through multiple runs of
  \outliers{} on $T$ as was done in the second round of the MapReduce
  algorithm described in Subsection~\ref{sec-out}.  Below, we will
  focus on the coreset construction.

The algorithm computes a coreset $T$ of $\tau \geq k+z$ centers which
represent a good approximate solution to the $\tau$-center problem on
$S$ (without outliers).  The value of $\tau$, which will be fixed
later, depends on $\hat{\epsilon}$ and $D$.  The main difference with
the MapReduce algorithm is the fact that we cannot exploit the
incremental approach provided by \gmm, since no efficient
implementation of \gmm in the Streaming setting is known. Hence, for
the computation of $T$ we resort to a novel weighted variant of the
\emph{doubling algorithm} by Charikar et al.~\cite{CharikarCFM04}
which is described below.

For a given stream of points $S$ and a target number of centers
$\tau$, the algorithm maintains a weighted set $T$ of centers selected among
the points of $S$ processed so far, and a lower bound $\phi$ on
$r_\tau^*(S)$. $T$ is initialized with the first $\tau+1$ points of $S$,
with each $t \in T$  assigned weight $w_t=1$,
while $\phi$ is initialized to half the minimum distance between
the points of $T$.
For the sake of the analysis, we
will define a proxy function $p: S \to T$ which, however, will not be
explicitly stored by the algorithm. Initially, each point of $T$ is
proxy for itself. The remaining points of $S$ are processed
one at a time maintaining the following invariants:
\begin{enumerate}[label=(\alph*)]
\item $T$ contains at most $\tau$ centers.
\item $\forall t_1, t_2 \in T$ we have $d(t_1, t_2) > 4\phi$
\item $\forall s \in S$ processed so far, 
$d(s,p(s)) \le 8\phi$.
\item $\forall t\in T$, $w_t = |\{s \in S \mbox{ processed so far}
: \; p(s) = t\}|$.
\item $\phi \le r_\tau^*(S)$.
\end{enumerate} 
The following two rules are applied to process each
new point $s \in S$.  The \emph{update rule} checks if $d(s,T) \le
8\phi$. If this is the case, the center $t \in T$ closest to $s$ is
identified and $w_t$ is incremented by one, defining $p(s) = t$. If
instead $d(s,T) > 8\phi$, then $s$ is added as a new center to $T$,
setting $w_s$ to 1 and defining $p(s)=s$.  Note that in this latter
case, the size of $T$ may exceed $\tau$, thus violating invariant (a).
When this happens, the following \emph{merge rule} is invoked
repeatedly until invariant (a) is re-established.  Each invocation of
this rule first sets $\phi$ to $2\phi$, which, in turn, may lead to a
violation of invariant (b).  If this is the case, for each pair of
points $u, v \in T$ violating invariant (b), we discard $u$ and set
$w_v \leftarrow w_v + w_u$. Conceptually, this corresponds to the
update of the proxy function which redefines $p(x)=v$, for each point
$x$ for which $p(x)$ was equal to $u$. 

Observe that, at the end of the initialization, invariants (a) and (b)
do not hold, while invariants (c)$\div$(e) do hold. Thus, we prescribe
that the merge rule and the reinforcement of invariant (b) are applied
at the end of the initialization before any new point is
processed. This will ensure that all invariants hold before the
$(\tau+2)$nd point of $S$ is processed. The following lemma shows the
above rules maintain all invariants.

\begin{lemma}\label{lem:doubling-algorithm}
After the initialization, at the end of the processing of each point $s \in S$,
all invariants hold. 
\end{lemma}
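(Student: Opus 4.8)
The plan is to argue by induction on the number of points processed, showing that each of the two processing rules (update and merge) preserves all five invariants, given that they hold before the point is processed. The base case is the state immediately after the initialization followed by the prescribed clean-up (one or more merge steps plus the reinforcement of invariant (b)); for this I would note that the initialization trivially guarantees (c)–(e) with $\phi$ equal to half the minimum pairwise distance of the first $\tau+1$ points, and that the clean-up re-establishes (a) and (b) — the same analysis as for a generic merge step handles this, so the base case is subsumed by the inductive step for the merge rule.

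First I would handle the \emph{update rule} when $d(s,T)\le 8\phi$: invariants (a), (b), (e) are untouched since $T$ and $\phi$ do not change; invariant (c) holds for $s$ by the rule's condition and for all previously processed points by the inductive hypothesis; invariant (d) is maintained because we increment exactly the weight of the center declared to be $p(s)$. Next, the \emph{update rule} when $d(s,T)>8\phi$ and $s$ becomes a new center: now $p(s)=s$ so (c) holds for $s$ with equality to $0\le 8\phi$, (d) holds since $w_s=1$, (b) is preserved because $s$ is at distance $>8\phi>4\phi$ from every existing center, (e) is untouched, but (a) may now fail with $|T|=\tau+1$, which is exactly what the merge rule is there to repair.

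Then I would analyze the \emph{merge rule}. Each invocation doubles $\phi$ and then, while some pair $u,v\in T$ violates (b) (i.e. $d(u,v)\le 4\phi$), discards $u$ and folds its weight into $v$. The main obstacle — and the real content of the lemma — is verifying that invariants (c) and (e) survive the doubling of $\phi$. For (c), I would track how the proxy function evolves: when $u$ is merged into $v$, every point $x$ with old proxy $u$ gets new proxy $v$, so $d(x,p(x))$ increases by at most $d(u,v)\le 4\phi_{\text{new}}$; combined with the old bound $d(x,u)\le 8\phi_{\text{old}}=4\phi_{\text{new}}$ this gives $d(x,v)\le 8\phi_{\text{new}}$, and points whose proxy is unchanged already satisfy $d(x,p(x))\le 8\phi_{\text{old}}\le 8\phi_{\text{new}}$ — so (c) is restored. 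For (e), I need $\phi\le r_\tau^*(S)$ after doubling; this is where I would invoke a Gonzalez/packing-style argument: just before the merge rule fired, $T\cup\{s\}$ (or the current $T$) contained $\tau+1$ points that were pairwise at distance $>4\phi_{\text{old}}=2\phi_{\text{new}}$ (invariant (b) held for the old $T$, and the newly added center was at distance $>8\phi_{\text{old}}>4\phi_{\text{old}}$ from all of them), and $\tau+1$ points pairwise separated by more than $2\phi_{\text{new}}$ force $r_\tau^*(S)\ge\phi_{\text{new}}$ by the standard two-points-in-one-optimal-cluster argument (as in Lemma~\ref{lem:gmm-subset}). Invariant (a) holds by the termination condition of the merge loop, (b) holds because the loop exits only when no violating pair remains, and (d) is maintained by the weight-folding. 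Since each merge invocation at least doubles $\phi$ while $r_\tau^*(S)$ is finite, and in fact each invocation that keeps running strictly decreases $|T|$, the merge loop terminates, completing the induction.
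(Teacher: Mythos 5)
Your argument is correct and follows essentially the same route as the paper's proof: invariants (a)--(d) are routine, and the crux is invariant (e), which you establish per merge invocation by noting that just before each doubling the $\tau+1$ current centers are pairwise more than $4\phi_{\mathrm{old}}=2\phi_{\mathrm{new}}$ apart, so the pigeonhole/triangle-inequality argument forces $r_\tau^*(S)\ge\phi_{\mathrm{new}}$ --- the paper expresses the very same idea by looking at the penultimate application of the merge rule and the final value $\phi'$. The one imprecision is your claim that the initialization clean-up is subsumed by the generic merge-step analysis: at initialization the $\tau+1$ points are only guaranteed pairwise separation $2\phi$, not more than $4\phi$, so that step needs its own short justification (the paper likewise simply asserts that all invariants hold at the end of initialization).
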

\begin{proof}
As explained above, all invariants are enforced at the end of the
initialization.  Consider the processing of a new point $s$. It is
straightforward to see that the combination of update and merge rules
maintain invariants (a)-(d). We now show that invariant (e) is also
maintained.  After the update rule is applied, only invariant (a) can
be violated.  Suppose that this is the case, hence $|T| = \tau+1$.
Each pair of centers in $T$ are at distance at least $4\phi$ from one
another (invariant (b)). Let $\phi'$ be the new value of $\phi$
resulting after the required applications of the merging rule.  It is
easy to see that until the penultimate application of the merge rule,
$T$ still contains $\tau+1$ points. Therefore each pair of these
points must be at distance at least $4(\phi'/2)=2\phi'$ from one
another. This implies, that $\phi'$ is still a lower bound to
$r_\tau^*(S)$.
\end{proof}

As an immediate corollary of the previous lemma, we have that after
all points of $S$ have been processed,
$d(s, p(s)) \le 8 \cdot r_\tau^*(S)$ for every $s \in S$.
Moreover, it is immediate to see that the working memory 
required by the algorithm has size $\BT{\tau}$.
Fix now
$\tau = (k+z)(16/\hat{\epsilon})^D$ 
and let $T$ be the weighted coreset $T$ of size $\tau$
returned by the above algorithm. The following lemma
 is 
the counterpart of Lemma~\ref{lem:proxy-outliers}
in the Streaming setting.
\begin{lemma}\label{lem:proxy-outliers-streaming}
For every $s \in S$, $d(s, p(s)) \le \hat{\epsilon}\cdot \opto$. 
\end{lemma}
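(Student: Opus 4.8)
The plan is to combine the error bound already established for the weighted doubling algorithm with a covering argument based on the doubling dimension of the metric space. Recall from the corollary to Lemma~\ref{lem:doubling-algorithm} that, once the whole stream has been processed, $d(s,p(s)) \le 8 \cdot r_\tau^*(S)$ for every $s \in S$. Hence it suffices to show that $r_\tau^*(S) \le (\hat{\epsilon}/8) \cdot \opto$, i.e., that $\tau$ centers taken from $S$ are enough to cover all of $S$ within radius $(\hat{\epsilon}/8)\cdot\opto$.

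To bound $r_\tau^*(S)$, I would first observe that the optimal solution of the $k$-center problem with $z$ outliers on $S$ consists of $k$ centers that cover all but (at most) $z$ points of $S$ within radius $\opto$; adding one further center at each of these excluded points (at distance $0$) yields $k+z$ centers covering \emph{all} of $S$ within radius $\opto$. Equivalently, by Eq.~\ref{eq:radius-relation}, $S$ admits a partition into $k+z$ clusters each of radius at most $r_{k+z}^*(S) \le \opto$. Now I invoke the doubling-dimension property: since $0 < \hat{\epsilon}/16 \le 1$, each of these balls of radius $\opto$ can be covered by at most $(16/\hat{\epsilon})^D$ balls of radius $(\hat{\epsilon}/16)\cdot\opto$, for a total of at most $(k+z)(16/\hat{\epsilon})^D = \tau$ small balls covering all of $S$.

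These small balls are centered at arbitrary points of the metric space, not necessarily in $S$, so they cannot be used directly as a $\tau$-center solution. To fix this, for each small ball containing at least one point of $S$ I pick one such point and designate it a center; this produces a set of at most $\tau$ centers, all lying in $S$. By the triangle inequality, every point of $S$ lies within distance $2\cdot(\hat{\epsilon}/16)\cdot\opto = (\hat{\epsilon}/8)\cdot\opto$ of the center chosen from the small ball that contains it. Therefore $r_\tau^*(S) \le (\hat{\epsilon}/8)\cdot\opto$, and combining with $d(s,p(s)) \le 8\cdot r_\tau^*(S)$ gives $d(s,p(s)) \le \hat{\epsilon}\cdot\opto$ for every $s \in S$, as claimed.

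The only delicate point — and the reason $\tau$ carries the constant $16$ rather than $8$ — is precisely this last step: the covering guaranteed by the doubling dimension uses ball centers that need not belong to $S$, so switching to centers in $S$ costs an extra factor of $2$ in the radius, which must be absorbed by choosing covering granularity $\hat{\epsilon}/16$ instead of $\hat{\epsilon}/8$. Everything else is a routine chaining of the bound $d(s,p(s))\le 8\, r_\tau^*(S)$ with Eq.~\ref{eq:radius-relation} and the definition of doubling dimension.
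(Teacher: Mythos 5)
Your proposal is correct and follows essentially the same route as the paper's proof: both reduce the claim to showing $r_\tau^*(S) \le (\hat{\epsilon}/8)\cdot\opto$ via the doubling-dimension covering of the $k+z$ optimal balls with $(16/\hat{\epsilon})^D$ smaller balls, account for the factor-2 loss when moving the ball centers into $S$, and then chain with $d(s,p(s)) \le 8\,r_\tau^*(S)$ from invariants (c) and (e). The only cosmetic difference is that you apply $r_{k+z}^*(S)\le\opto$ before the covering step rather than after, which changes nothing.
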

\ifextended
\begin{proof}
Observe that $S$ can be covered using $k+z$ balls of radius
$r_{k+z}^*(S)$.  Since $S$ comes from a space of doubling dimension
$D$, we know that $S$ can also be covered using $\tau =
(k+z)(16/\hat{\epsilon})^{D}$ balls (not necessarily centered at points
in $S$) of radius $\le\hat{\epsilon}/16 \cdot r_{k+z}^*(S)$.  
Picking an arbitrary center of $S$ from each such ball induces
a $\tau$-clustering of $S$ with radius at most $\hat{\epsilon}/8 \cdot
r_{k+z}^*(S)$. Hence,
\[
r_\tau^*(S) \le \hat{\epsilon}/8 \cdot r_{k+z}^*(S).
\]
Since $r_{k+z}^*(S) \le
\opto$, it follows that $r_\tau^*(S) \le \hat{\epsilon}/8 \cdot \opto$.
By invariants (c) and (e) we have that for every $s \in S$
\[
d(s, p(s)) \le 8 \phi \le 
8\cdot r_\tau^*(S) \le \hat{\epsilon} \cdot \opto.
\qed
\]
\end{proof}
\fi
The following theorem states the main result of this section.
\begin{theorem}\label{thm:streaming-outliers}
Let $0 < \epsilon \le 1$.
If the points of $S$ belong to a metric space of doubling dimension
$D$, then, when run with $\hat{\epsilon}=\epsilon/6$,
the above 1-pass Streaming algorithm
computes a
$(3+\epsilon)$-approximation for the $k$-center problem 
with $z$ outliers with working memory of size
$\BO{(k+z)(96/\epsilon)^D}$.
\end{theorem}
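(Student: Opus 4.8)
The plan is to combine three ingredients that have already been established: the quality of the weighted coreset $T$ produced by the doubling algorithm (Lemma~\ref{lem:proxy-outliers-streaming}), the approximation guarantee of \outliers{} when run on $T$ with radius $r \geq \opto$ (Lemma~\ref{lem:outliers-approx}), and the accuracy of the geometric/binary search used to estimate $r_{\rm min}$. Since the post-processing step at the end of the stream is \emph{exactly} the second round of the MapReduce algorithm of Subsection~\ref{sec-out}, the analysis of Theorem~\ref{thm:outliers-mr} transfers almost verbatim; the only genuinely new piece is verifying that the coreset $T$ built by the doubling algorithm has the same two properties that the \gmm-based coreset had in the MapReduce case, namely a small proxy distance and size $\BO{(k+z)(96/\epsilon)^D}$.

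First I would fix $\hat{\epsilon} = \epsilon/6$, which makes $\tau = (k+z)(16/\hat{\epsilon})^D = (k+z)(96/\epsilon)^D$; by the observation following Lemma~\ref{lem:doubling-algorithm}, the working memory is $\BT{\tau} = \BO{(k+z)(96/\epsilon)^D}$, giving the claimed space bound. Next, invoke Lemma~\ref{lem:proxy-outliers-streaming} to conclude $d(s,p(s)) \le \hat{\epsilon}\cdot\opto$ for every $s \in S$ — this is the exact analogue of Lemma~\ref{lem:proxy-outliers}. Then I would run \outliers{}$(T,k,r,\hat{\epsilon})$ over a geometric search with step $(1+\delta)$, $\delta = \hat{\epsilon}/(3+4\hat{\epsilon})$, on top of a binary search over the $\BO{|T|^2}$ pairwise distances (with the median-finding trick to keep space linear in $|T|$), to find an estimate $\tilde r_{\rm min} \le \opto(1+\delta)$ of the smallest radius for which the uncovered weight is at most $z$; the existence of such a value with this tolerance is guaranteed by Lemma~\ref{lem:outliers-approx} applied at $r = \opto$ together with the $(1+\delta)$ granularity of the search, precisely as in the proof of Theorem~\ref{thm:outliers-mr}.

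Finally I would put the pieces together exactly as in Theorem~\ref{thm:outliers-mr}: for a non-outlier point $s \in S$, route through its proxy $p(s) \in T$, use the first claim of Lemma~\ref{lem:outliers-approx} to get a center $x \in X$ with $d(p(s), x) \le (3+4\hat{\epsilon})\tilde r_{\rm min}$, and apply the triangle inequality to obtain
\[
d(s, x) \le \hat{\epsilon}\opto + (3+4\hat{\epsilon})\opto(1+\delta) \le (3+6\hat{\epsilon})\opto = (3+\epsilon)\opto,
\]
where the second claim $|S_{T'}| \le z$ of Lemma~\ref{lem:outliers-approx} certifies that at most $z$ points are left uncovered, so this bound indeed holds with at most $z$ outliers discarded. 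The only subtlety — and the step I would be most careful about — is making sure the proxy function used in Lemma~\ref{lem:outliers-approx} is the one maintained (conceptually) by the doubling algorithm, including its redefinitions under the merge rule, and that invariant (d) guarantees the weights $w_t$ count exactly the points mapped to $t$; once that bookkeeping is checked, the charging argument of Lemma~\ref{lem:outliers-approx} goes through unchanged because it only uses the proxy/weight relationship and the bound on $d(s,p(s))$, not any structural property specific to \gmm.
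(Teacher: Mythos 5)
Your proposal is correct and follows essentially the same route as the paper: the paper's proof likewise combines Lemma~\ref{lem:proxy-outliers-streaming} with the argument of Lemma~\ref{lem:outliers-approx} and Theorem~\ref{thm:outliers-mr} for the approximation factor, and gets the space bound from $|T|=\tau=(k+z)(16/\hat{\epsilon})^D$ with $\hat{\epsilon}=\epsilon/6$. Your extra check that the charging argument only relies on the proxy/weight relationship (invariant (d)) and the bound on $d(s,p(s))$, not on any \gmm-specific structure, is exactly the observation that makes the transfer legitimate.
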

\begin{proof}
Given the result of Lemma~\ref{lem:proxy-outliers-streaming},
the approximation factor can be established in exactly the same way as
done for the MapReduce algorithm (refer to Lemma~\ref{lem:outliers-approx} and 
Theorem~\ref{thm:outliers-mr}), while the 
bound on the working memory size follows directly from the choice of $\hat{\epsilon}$,
the fact that $|T|=\tau=(k+z)(16/\hat{\epsilon})^D$, and 
the fact that the Streaming algorithm needs memory proportional  $|T|$.
\end{proof}

\begin{corollary}\label{corol:streaming-outliers}
For constant $\epsilon$ and $D$, 
the above Streaming algorithm computes a
$(3+\epsilon)$-approximation for the $k$-center problem with
$z$ outliers with working memory of size
$\BO{(k+z)}$, independent of $|S|$.
\end{corollary}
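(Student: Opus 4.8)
The plan is to obtain this as an immediate specialization of Theorem~\ref{thm:streaming-outliers}, so the ``proof'' is really just a reading-off of the asymptotics already established there. First I would recall the guarantee of that theorem: for every $0 < \epsilon \le 1$ and every metric space of doubling dimension $D$, running the 1-pass Streaming algorithm with $\hat{\epsilon}=\epsilon/6$ yields a $(3+\epsilon)$-approximation for the $k$-center problem with $z$ outliers, using working memory $\BO{(k+z)(96/\epsilon)^D}$. Second, I would observe that once $\epsilon$ and $D$ are regarded as absolute constants, the multiplicative factor $(96/\epsilon)^D$ is itself a constant, and in particular one that does not depend on $|S|$, $k$, or $z$. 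Folding this constant into the $\BO{\cdot}$ notation turns the working-memory bound into $\BO{k+z}$, which is independent of the input size $|S|$, while the approximation ratio and the pass count are of course unaffected. Hence the corollary follows.

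The only point that deserves a word is that the suppressed constant in the $\BO{\cdot}$ statement now hides a factor that grows exponentially in $D$ and polynomially in $1/\epsilon$; but for fixed $\epsilon$ and $D$ this is a legitimate asymptotic simplification, so there is no genuine obstacle here — the step is purely bookkeeping on top of Theorem~\ref{thm:streaming-outliers}.
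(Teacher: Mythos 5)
Your proposal is correct and matches the paper's (implicit) reasoning exactly: the corollary is just Theorem~\ref{thm:streaming-outliers} with the factor $(96/\epsilon)^D$ absorbed into the constant once $\epsilon$ and $D$ are fixed, leaving a working-memory bound of $\BO{k+z}$ independent of $|S|$. Nothing further is needed.
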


A few remarks are in order. For simplicity, to compute the weighted
coreset $T$ we preferred to adapt the 8-approximation algorithm by
\cite{CharikarCFM04} rather than the more complex 
$(2+\epsilon)$-approximation algorithm by \cite{McCutchen2008}, since
this choice does not affect the approximation guarantee of our
algorithm but comes only at the expense of a slight increase in the
coreset size. Also, by applying similar techniques, we can obtain
a Streaming algorithm for the $k$-center problem
without outliers which uses  $\BO{k(1/\epsilon)^D}$ space
and features the same $(2+\epsilon)$-approximation as
\cite{McCutchen2008}. In
Section~\ref{sec:experiments} we compare the two algorithms experimentally. \\[0.1cm]

{\bf A 2-pass Streaming algorithm oblivious to $D$.}  As explained
before, thanks to its incremental nature, the MapReduce coreset
construction does not require explicit knowledge of the doubling
dimension $D$ of the metric space.  However, this is not the case for
the 1-pass Streaming algorithm described above, which requires the
apriori knowledge of $D$ to determine the proper value of $\tau$.
While in practice one can set $\tau$ to exercise suitable tradeoffs
between running time, working memory space and approximation quality,
it is of theoretical interest to observe that a simple-two pass
algorithm oblivious to $D$ with roughly the same bounds on the size of
the working memory can be obtained by ``simulating'' the 2-round
MapReduce algorithm for $\ell=1$.  

In the first pass, we run the doubling algorithm of
\cite{CharikarCFM04} for the $(k+z)$-center problem, thus obtaining a
radius value $\hat{r} \leq 8 r^*_{k+z}\leq 8r^*_{k,z}$.  Using
$\hat{r}$ as an estimate for $r^*_{k,z}$, in the second pass we
determine a maximal weighted coreset $T$ of points whose mutual
distances are greater than $(\epsilon/48)\hat{r}$. During the pass,
each point $s \in S-T$ is virtually assigned to a proxy in $T$ at
distance at most $(\epsilon/48)\hat{r}$, and for every $x \in T$ a
weight is computed as the number of points for which $x$ is
proxy. Finally, our weighted variant of the algorithm of
\cite{Charikar2001} is run on $T$.  It is easy to see that $|T| \leq
(k+z)(96/\epsilon)^D$ and that each point of $S$ is at distance at
most $\epsilon/6$ from its proxy. This immediately implies this
two-pass strategy returns a $(3+\epsilon)$-approximate solution to the
$k$-center problem with $z$ outliers with the same working memory
bounds as those stated in Theorem~\ref{thm:streaming-outliers} and
Corollary~\ref{corol:streaming-outliers}.

\section{Experiments} \label{sec:experiments}

In order to demonstrate the practical appeal of our approach, we
designed a suite of experiments with the following objectives:
(a) to assess the impact of coreset size 
on solution quality in our MapReduce and Streaming 
algorithms and to compare them to the state-of-the-art algorithms
for $k$-center with and without outliers (Subsections~\ref{subsec:kcenter} and \ref{subsec:outliers}, respectively);
(b) to assess the scalability of our MapReduce algorithms
(Subsection~\ref{subsec:scalability}); and (c)
to show that the MapReduce algorithm for  $k$-center without outliers
yields a much faster
sequential algorithm for the problem
(Subsection~\ref{subsec:run-sequential}).

\vspace{.3em}
\noindent\textbf{Experimental setting.}  The experiments were run on a
cluster of 16 machines, each equipped with a 18GB RAM and a 4-core
Intel I7 processor, connected by a 10GBit Ethernet network, using
Spark \cite{Zaharia2010} for implementing the MapReduce algorithms,
and a sequential simulation for the Streaming setting. We exercised our
algorithms on two low-dimensional real-world datasets used in
\cite{MalkomesKCWM15}, to facilitate the comparison with that work,
and on a higher-dimensional dataset as a stress test for our
dimension-sensitive strategies. The first dataset, 
\dataset{Higgs}~\cite{HiggsData}, 
contains 11 million points used to train learning algorithms for
high-energy Physics experiments.  The second dataset,
\dataset{Power}~\cite{PowerData}, 
contains 2,075,259 points which are measurements of electric
power consumptions in a house over four years.  The \dataset{Higgs}
dataset features 28 attributes, where 7 of them are a function of the
other 21. In~\cite{MalkomesKCWM15} only the 7 derived attributes were used: we do the same for the sake of comparison.
The \dataset{Power} dataset has 7 numeric attributes (we ignore the two non numeric features).
The third higher-dimensional dataset was
obtained from a dump of the English Wikipedia (dated December 2017) using the word2vec~\cite{Mikolov13} model with 50 dimensions.
This dataset, which we call \dataset{Wiki}, comprises 5,512,693 vectors.
To test the
scalability of our algorithms, we also generated artificially-inflated
instances of the \dataset{Higgs}, \dataset{Power}, and \dataset{Wiki} datasets (see details in
Subsection~\ref{subsec:scalability}). For all datasets we used the
Euclidean distance.  All numerical figures 
have been obtained as averages over at least 10 runs
and are reported in the graphs 
together with 95\% confidence intervals. The solution quality
is expressed in terms of the \emph{approximation 
ratio}, estimated empirically as the ratio between the radius of the returned clustering 
and the best radius ever found across \emph{all} experiments 
with the same dataset and parameter
configuration. (Note that the hardness of the problems
makes computing the actual optimal solution unfeasible.) The
source code of our algorithms is publicly available at
\url{https://github.com/Cecca/coreset-clustering}.

\subsection{$k$-center} \label{subsec:kcenter}
\begin{figure}[t]
    \includegraphics[width=\columnwidth]{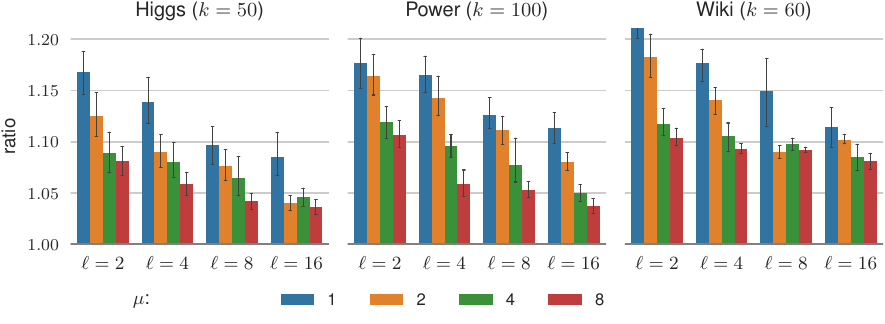}
    \caption{\sloppy Approximation
    ratio attained by the MapReduce algorithm for $k$-center
    using  
    coresets of size $\mu k$,  with $\mu = 1,2,4,8$, and 
    parallelism $\ell=2,4,8,16$.}
    \label{fig:k-center}
\end{figure}
We first evaluated the MapReduce algorithm for the $k$-center problem,
presented in
Subsection~\ref{sec-noout},
aiming at assessing the impact of the coreset size on the 
quality of the returned solution.  For
simplicity, rather than varying the precision parameter $\epsilon$, we varied 
the size of the coreset $T_i$ extracted from each partition $S_i$, 
setting it to the same value $\tau = \mu k$ for all $i$,
with $\mu=1,2,4,8$. Note that for
$\mu=1$ the algorithm corresponds to the one in \cite{MalkomesKCWM15}.
We fixed $k=50$ for the
\dataset{Higgs} dataset, $k=100$ for the \dataset{Power}
dataset, and $k=60$ for the \dataset{Wiki} dataset. These values of $k$, determined through a
number of experiments (omitted for brevity) have been chosen as
reasonable values marking the beginning of a plateau
in the radius of the clustering induced by the returned centers.
The plot in Figure~\ref{fig:k-center} reports the approximation
ratio attained by the algorithm for different coreset sizes
and degrees of parallelism.
As implied by the theory, the solution quality improves noticeably as
the size of the coreset (regulated by $\mu$) increases. Moreover, the
experiments show that, with respect to the algorithm by
\cite{MalkomesKCWM15} (blue bar
in the plot), even a moderate increase in the coreset size yields a
sensibly better solution. 
This behavior is observed also on the \dataset{Wiki} dataset, which, given its
high dimensionality, is a difficult input for our algorithm. 
In these experiments, the running times, not
reported for brevity, exhibited essentially a linear behavior in
$\tau$, for fixed parallelism, but remained tolerable (under one minute) even for
$\tau=8k$ and parallelism $\ell=2$.  Considering also the scalability
of the algorithm, which will be assessed in
Subsection~\ref{subsec:scalability}, we can conclude that using larger
coresets can yield better solution quality at a tolerable performance
penalty. From the figure, we finally 
observe that increasing the parallelism $\ell$ also leads to
better solutions, which is due to the fact that the size $\ell\cdot\tau$ of the
aggregated coreset $T$ on which \gmm is run in the second round, increases.  

\newcommand{\baseline}{\textsc{BaseStream}\xspace}
\newcommand{\ours}{\textsc{CoresetStream}\xspace}

\begin{figure}
  \centering
  \includegraphics[width=\columnwidth]{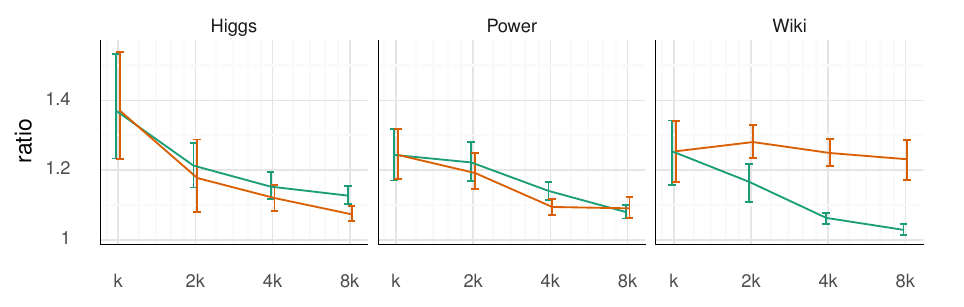}
  \includegraphics[width=\columnwidth]{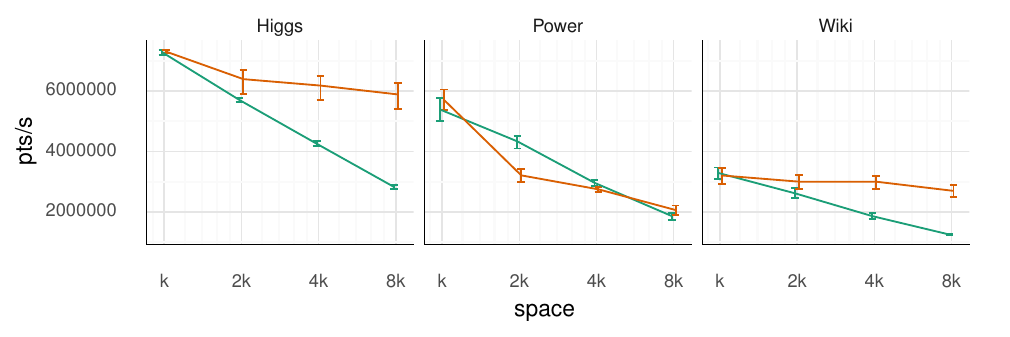}
  \caption{Approximation ratio (top) and throughput (bottom) versus space for 
the \ours (in orange) and \baseline (in green) $k$-center streaming algorithms. \ours uses space $\mu \cdot k$, 
with $\mu = 1,2,4,8,16$, \baseline
requires space $m\cdot k$, with $m=1,2,4,8,16$ ($\mu$ and $m$ increase from left to right in each plot).
}
\label{fig:no-outliers-streaming}
\end{figure}

For what concerns the Streaming setting, as observed in 
Section~\ref{sec:streaming}, our coreset approach would yield an
algorithm matching the approximation quality of the state-of-the-art
$(2+\epsilon)$-approximation algorithm by \cite{McCutchen2008}.
Nonetheless, we performed a number of experiments to compare the
practical performance of the two algorithms. The results,
reported in Figure~\ref{fig:no-outliers-streaming}, show that 
the algorithm by \cite{McCutchen2008} (dubbed {\sc BaseStream})
makes slightly better use of 
the available space, although our algorithm (dubbed {\sc CoresetStream}) often exhibits higher throughput
while yielding similar approximation quality.

\subsection{$k$-center with outliers} \label{subsec:outliers}
To evaluate our algorithms for the $k$-center problem with $z$
outliers, we artificially injected outliers into the datasets as
follows.  For each dataset, we first determined radius $r_{\mbox{\tiny MEB}}$
and center $c_{\mbox{\tiny MEB}}$ of its Minimum Enclosing Ball (MEB).  Then,
we added $z=200$ points at distance $100\cdot r_{\mbox{\tiny MEB}}$ from the
$c_{\mbox{\tiny MEB}}$ in random directions.  By doing so, each added point is
at distance $\ge 99\cdot r_{\mbox{\tiny MEB}}$ from any point in the
dataset. Furthermore, we verified that the minimum distance between
any two added points is $\ge 10\cdot r_{\mbox{\tiny MEB}}$, making these points
true outliers.

\begin{figure}
\centering
\includegraphics[width=\columnwidth]{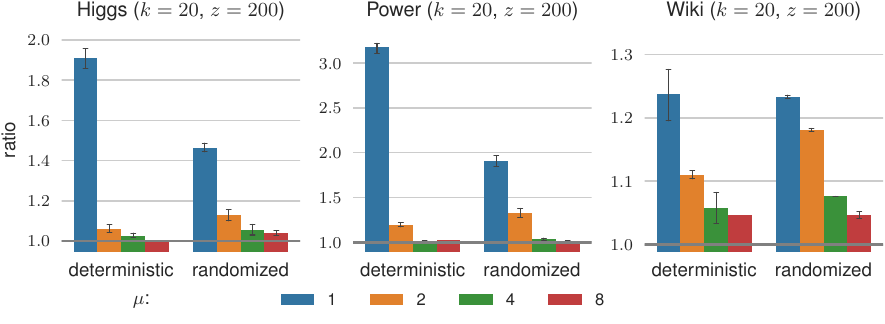}
\includegraphics[width=\columnwidth]{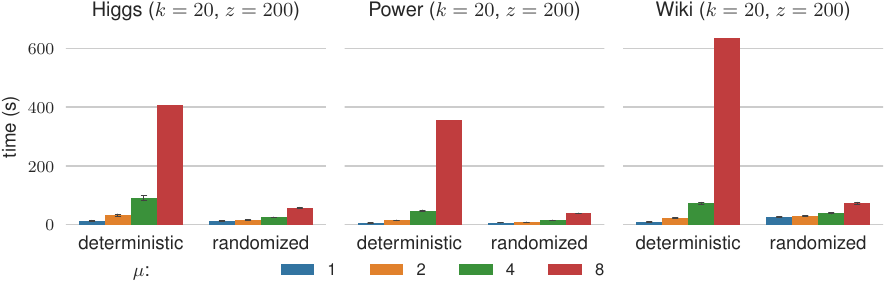}
\caption{Approximation ratio (top) and running time (bottom) attained by the deterministic and randomized MapReduce
algorithms for the $k$-center with $z$ outliers problem, 
using coresets of size
$\mu(k + z)$ and $\mu(k + 6\cdot z/\ell)$, respectively, with $\mu = 1,2,4,8$,
and fixed parallelism $\ell=16$.}
\label{fig:outliers-quality}
\end{figure}

A first set of experiments was run to compare the deterministic and
randomized versions of our algorithm presented in
Subsection~\ref{sec-out} against each other and against the algorithm
in \cite{MalkomesKCWM15}.  We set $k=20$ and $z=200$ for both datasets
and fixed the parallelism to $\ell=16$.  Also, we partitioned the data
adversarially, placing all outliers in the same partition so to better
test the benefits of randomization.  As before, rather than regulating
the size of each coreset $T_i$ through the precision parameter, we
fixed it equal to $\tau$ for each $i$, setting $\tau = \mu(k+z)$ for
the deterministic algorithm, and $\tau=\mu(k+ 6\cdot z/\ell)$ for the
randomized one, with $\mu=1,2,4,8$.  Again, the deterministic
algorithm with $\mu=1$ coincides with the algorithm by
\cite{MalkomesKCWM15}.  Based on Lemma~\ref{lem:occupancy}, the term
$6\cdot z/\ell$ in the value of $\tau$ for the randomized algorithm
is meant to upper bound the number of outliers included in each
partition (ignoring the logarithmic factor which is needed 
to ensure high probability only
when $z \simeq \ell$).

Figure~\ref{fig:outliers-quality} reports the results of these
experiments.  As before, we note that the quality of the solution
improves noticeably with the coreset size (regulated by $\mu$) and
even a moderate increase in the coreset size yields a significant
improvement with respect to the baseline of \cite{MalkomesKCWM15},
represented by the blue column ($\mu=1$, deterministic). In
particular, when $\mu=1$ the coreset extracted from the partition
containing all outliers is forced to include the outliers, hence few
other centers can be selected to account for the non-outlier points in
the partition, which are thus underrepresented. In this case, the
randomized algorithm, where the number of outliers per partition is
smaller and slightly overestimated by the constant 6, attains a better
solution quality.  As the coreset size increases, there is a sharper
improvement of the quality of the solution found by the deterministic
algorithm, since there are now enough centers to well represent the
non-outlier points, even in the partition containing all outliers,
while in the randomized algorithm, the effect of the coreset size on
the quality of the solution is much smoother.  Nevertheless, for $\mu
> 1$, the randomized algorithm finds solutions of comparable quality
to the ones found by the deterministic algorithm, using much smaller
coresets.  For what concerns the running time, the bottom plots of
Figure~\ref{fig:outliers-quality} clearly show that the reduction in
the coreset size featured by the randomized algorithm yields high
gains in performance, providing evidence that this algorithm can
attain much better solutions than \cite{MalkomesKCWM15} with a
comparable running time.


\renewcommand{\baseline}{\textsc{BaseOutliers}\xspace}
\renewcommand{\ours}{\textsc{CoresetOutliers}\xspace}

\begin{figure}
  \centering
  \includegraphics[width=\columnwidth]{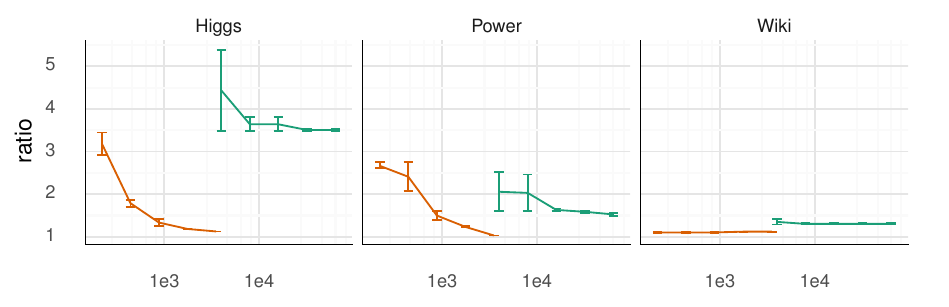}
  \includegraphics[width=\columnwidth]{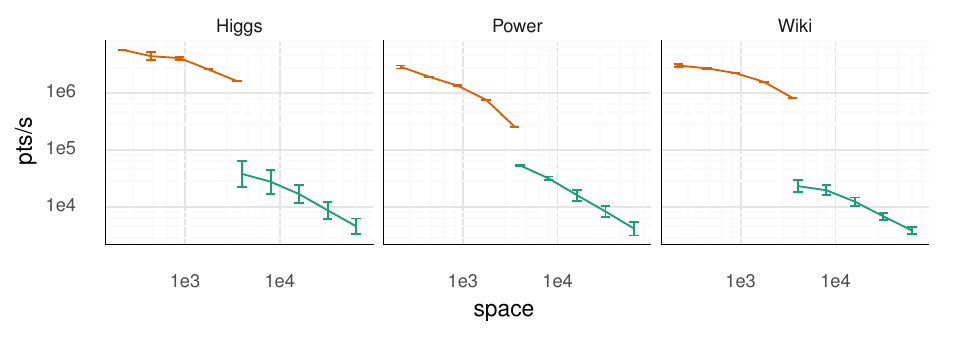}
  \caption{Approximation ratio (top) and throughput (bottom) versus space for 
    \ours (in orange) and \baseline (in green). \ours uses space $\mu(k + z)$, 
with $\mu = 1,2,4,8,16$, \baseline
requires space $m(k\cdot z)$, with $m=1,2,4,8,16$ ($\mu$ and $m$ increase from left to right in each plot).
Space and throughput are in logarithmic scale.
}
  \label{fig:outliers-streaming}
\end{figure}

In a second set of experiments, we studied the impact of the coreset
size on the quality of the solution computed by the Streaming
algorithm presented in Section~\ref{sec:streaming} (dubbed \ours)
and compared its performance with the state-of-the-art algorithm
of~\cite{McCutchen2008} (dubbed \baseline) which essentially runs a
number $m$ of parallel instances of a $(k\cdot z)$-space Streaming
algorithm, where $m$ depends on the desired approximation target.  We
used the same datasets and the same input parameters ($k=20$ and
$z=200$) as in the previous experiment. The points are shuffled before
being streamed to the algorithms.
Since the two algorithms feature different parameters, we compare
their performance as a function of the amount of space used, which is
$\mu(k+z)$ (i.e., the coreset size) for \ours, and $m(k\cdot z)$ for
\baseline.
The results are reported in Figure~\ref{fig:outliers-streaming}. 
We observe that for \dataset{Higgs} and \dataset{Power} \ours yields
better approximation ratios than \baseline using considerably less
space, which is coherent with the better theoretical quality featured
by the former algorithm.  For both algorithms, using more resources
(i.e., larger values of $\mu$ and $m$, respectively) leads to better
quality solutions, with \ours approaching the best quality ever
attained (approximation ratio almost 1). As for \dataset{Wiki}, we note
that both algorithms already yield very good solutions 
with minimum
space, which implies that for this dataset larger space does not
provide significant quality improvements. This is probably an effect of
the high dimensionality of the dataset.
To assess efficiency, we considered \emph{throughput}, i.e., the
number of points processed per second by the algorithm ignoring the
cost of streaming data from memory.  As expected, for both \ours and
\baseline throughput is inversely proportional to the space used.
However, by comparing the top and bottom graphs for each dataset,
it can be immediately seen that for a fixed approximation ratio, \ours
uses less space and exhibits a throughput substantially higher (always
more than 1 order of magnitude).  Thanks to its high throughout, even
for large values of $\mu$, \ours is able to keep up with real-world
streaming pipelines (e.g., in 2013 Twitter peaked at 143,199
tweets/s~\cite{TwitterThroughput}).

\subsection{Scalability of the MapReduce algorithms} \label{subsec:scalability}

For brevity, we focus on the randomized MapReduce algorithm for the
$k$-center problem with $z$ outliers, since the results for the other
cases are similar. A first set of experiments was run to assess the
scalability with respect to the input size.  To this end, we generated
synthetic instances of the \dataset{Higgs}, \dataset{Power}, and
\dataset{Wiki} datasets, $h$ times larger than the original datasets,
with $h =25, 50$ and 100. We used the following generation process.
Starting with the original dataset, a random point is sampled, and
each of its coordinates is modified through the addition of a Gaussian
noise term with mean 0 and standard deviation which is $10\%$ of the
difference between the maximum and the minimum value of that
coordinate across the original dataset. This perturbed point is then
added to the synthetic dataset until the desired size is reached.  The
rationale behind this construction is to build a (much larger)
synthetic dataset with the same clustered structure as the original
one, similarly to the SMOTE technique used in machine learning to
combat class imbalance \cite{ChawlaBHK02}. Also, outliers have been
added to each generated instance, as detailed in the previous
subsection.  On each instance of the datasets we ran the randomized
MapReduce algorithm with $k=20$, $z=200$, using maximum parallelism
($\ell=16$) and setting the size of each coreset $T_i$ to $8*(k +
6\cdot z/\ell)$.  Figure~\ref{fig:scalability-n} plots the running
times (averages of 10 runs) and shows that the algorithm scales
linearly with the input size.
\begin{figure}
  \centering
\includegraphics[width=\columnwidth]{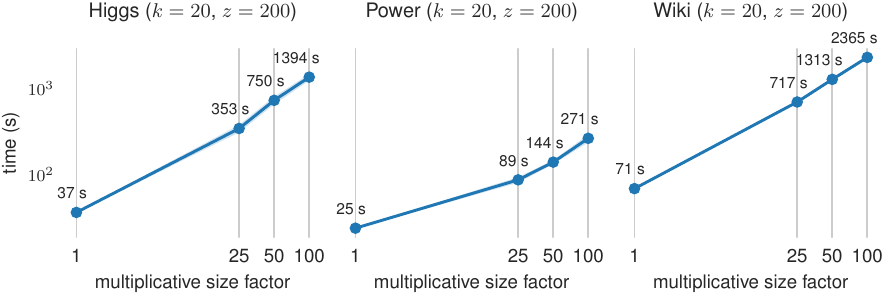}
\caption{Scalability with respect to input size
of the randomized MapReduce algorithm
for the $k$-center problem with $z$ outliers,
using coresets of size $8\cdot(k + 6\cdot z/\ell)$ and parallelism 16.
Both axes are in logarithmic scale.}
  \label{fig:scalability-n}
\end{figure}

We ran a second set of experiments to assess the scalability of the algorithm
with respect to the number of processors.  For these experiments, we
used the original datasets with added outliers, setting 
$k=20$ and $z=200$, as before. In order to target the same
solution quality over all runs, we fixed the size of the union of the
coresets, from which \outliers{} extracts the final solution, equal to
$8(16k + 6z)$, which corresponds to the case $\mu=8$ and $\ell=16$ of
Figure~\ref{fig:outliers-quality}.  Then, we ran the algorithm varying
the parallelism $\ell$ between $1$ and $16$, setting, for each value
of $\ell$, the size of each $T_i$ to
$\tau_{\ell}=8(16k + 6z)/\ell$,
so to obtain the desired size for the union.
Figure~\ref{fig:scalability-p} plots the running times 
distinguishing between the time required by the coresets construction
(orange area)
and the time required by \outliers{} (blue area). 
While the latter time is clearly constant, 
coreset construction time, which 
dominates the running time for small $\ell$, scales superlinearly with
the number of processors. In fact, doubling the parallelism results in
about a 4-fold improvement of the running time up to 
8 processors, since
each processor performs work proportional to $\tau_{\ell} \cdot |S|/\ell$,
and  $\tau_{\ell}$ embodies an extra factor $\ell$
in the denominator. This effect is milder going from 8 to 16 processors because
of the overhead of initial random shuffle of the data.

\begin{figure}[t]
  \centering
  \includegraphics[width=\columnwidth]{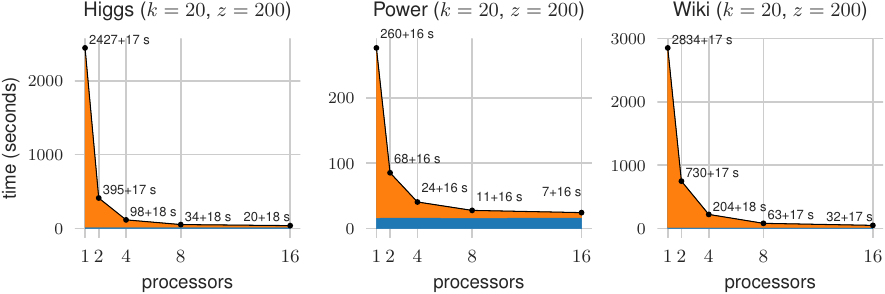}
  \caption{Scalability with respect to the number of processors
of the randomized MapReduce algorithm
for the $k$-center problem with $z$ outliers,
using coresets of size $8*(16k + 6z)$ and parallelism 
$\ell=1,2,4,8,16$. Each point is labeled with the sum of 
the time required to build the coreset
(orange area) and the time required to compute the final solution on the coreset
(blue area).}
  \label{fig:scalability-p}
\end{figure}

\subsection{Improved sequential performance}
\label{subsec:run-sequential}
As we discussed in Section~\ref{sec:MR}, for the $k$-center problem
with $z$ outliers we can improve on the superquadratic complexity of the state of the art algorithm in \cite{Charikar2001}, which we dub \textsc{CharikarEtAl} in the following, by running our deterministic MapReduce algorithm sequentially,
at the expense of a slightly worse approximation
guarantee. 
(In fact, the \textsc{CharikarEtAl} algorithm amounts to $\BO{\log|S|}$ executions of our \outliers with $\hat{\epsilon}=0$ and unit weights on the entire input $S$.)
To quantify the achievable gains, we took a sample of 10000
points from each dataset (so to keep \textsc{CharikarEtAl}'s running time within
feasible bounds).  As before, we injected 200 outliers, using the same
procedure outlined above, and set $k=20$ and $z=200$. We ran our
MapReduce algorithm with $\ell=1$ (indeed, for $\ell=1$, the algorithm
is sequential)
and $\mu=1,2,4,8$. 
Figure~\ref{fig:sequential-comparison} reports, for the three datasets,
the running times 
(top plots) and the radii of 
the returned clusterings (bottom plot) for
\textsc{CharikarEtAl}  
and our algorithm for varying $\mu$.
Measures are averages over 10 runs, with the input dataset shuffled before each run.
Note that the case
$\mu=1$ corresponds to the
algorithm in \cite{MalkomesKCWM15}, therefore
we label it as \textsc{MalkomesEtAl} 
\begin{figure}[t]
  \centering
  \includegraphics[width=\columnwidth]{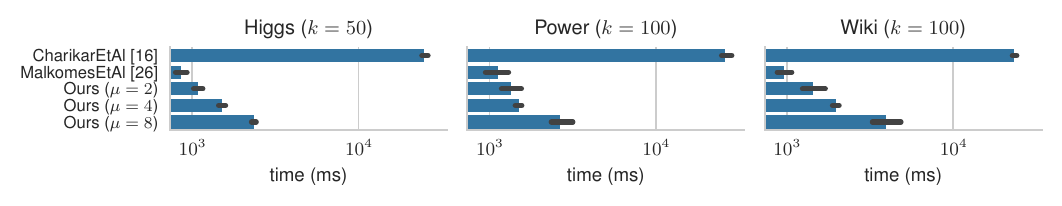}
  \includegraphics[width=\columnwidth]{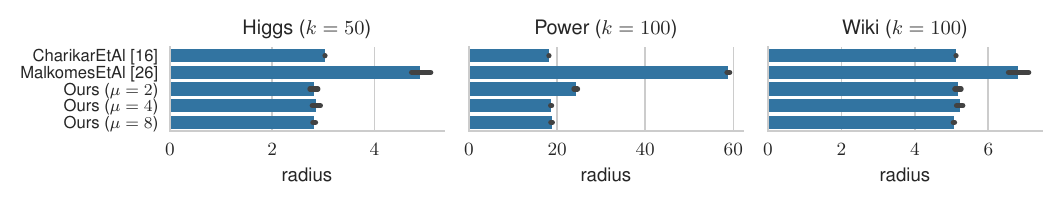}
  \caption{\label{fig:sequential-comparison} Running time (top, in logarithmic scale) and radius (bottom) of different sequential algorithms on a sample of 10 thousands points of \dataset{Higgs}, \dataset{Power}, and \dataset{Wiki}.}
  \vspace{-1em}
\end{figure}
From the figure it is clear that 
building a coreset before running \outliers 
is highly beneficial for the running time, which improves by
one order of magnitude. However, the solution quality for 
\textsc{MalkomesEtAl} (i.e., $\mu=1$)
 is much worse than
the one featured by \textsc{CharikarEtAl}.  In contrast, the bars for $\mu > 1$
show that a substantial performance improvement over the one of
\textsc{CharikarEtAl} can be attained, 
while keeping the approximation quality
essentially unchanged.  Observe that, in some cases, our algorithm
returns better radii than \textsc{CharikarEtAl}, even if from the theory one would expect a slightly
worse behavior. This is probably due to the fact that while
\textsc{CharikarEtAl} is essentially insensitive to shufflings of the data, our
coreset construction, based on \gmm, introduces an element of
arbitrariness with the choice of the initial center, which may result
in different coresets for different shuffles, 
potentially leading to a better average
solution quality.

\section{Conclusions}\label{sec:conclusions}

We presented MapReduce and Streaming algorithms for the $k$-center
problem (with and without outliers) based on flexible coreset
constructions. These constructions yield a spectrum of space-accuracy
tradeoffs regulated by the doubling dimension $D$ of the underlying
space, and afford approximation guarantees arbitrarily
close to those of the best sequential strategies, using moderate space
in the case of small $D$. The theoretical analysis of the algorithms is
complemented by experimental evidence of their practicality.

\ifextended
Coresets provide an effective way of processing large amounts of data
by building a succinct summary of the input which can then be processed
with the sequential algorithm of choice. In particular, we showed how
to leverage coresets to build MapReduce and Streaming algorithms for
the $k$-center problem with and without outliers.  Building on
state-of-the art approaches for these problems, we provide flexible
coreset constructions which yield a spectrum of space-accuracy
tradeoffs which allow to obtain approximation guarantees that can be
made arbitrarily close to those obtainable with the best sequential
strategies at the expense of an increase of the memory requirements,
regulated by the dimensionality of the underlying metric space. The
theoretical findings are complemented by experimental evidence of the
practicality of the proposed algorithms.
\fi

Future avenues of research include further improvements of the local
memory requirements of the MapReduce algorithms, the development of a
1-pass Streaming algorithm oblivious to the doubling dimension $D$ of
the metric space, and the extension of our approach to other
(center-based) clustering problems.

\bibliographystyle{abbrv}
\bibliography{references}

\end{document}